\renewcommand{\qedsymbol}{\rule{0.6em}{0.6em}}
\newcommand{\quotes}[1]{``#1''}
\newtheorem{thm}{Theorem}
\newtheorem{Lemma}{Lemma}
\newtheorem{claim}{Claim}
\newtheorem{lemma}[Lemma]{Lemma}
\newtheorem{Claim}[claim]{Claim}
\newtheorem{rk}{Remark}
\newtheorem{defff}{Definition}
\theoremstyle{remark}
\newtheorem{Rk}[rk]{Remark}
\theoremstyle{definition}
\newtheorem{deff}[defff]{Definition}
\newcommand{\ZZ}{\mathbb{Z}}
\newcommand{\UU}{\mathcal{U}}
\newcommand{\A}{\mathcal{A}}
\newcommand{\BB}{\mathcal{B}}
\newcommand{\CC}{\mathcal{C}}
\newcommand{\GG}{\mathcal{G}}
\newcommand{\GGpm}{\mathcal{G^{\pm}}}
\newcommand{\GGppm}{\mathcal{G^{\pm,+-}}}
\DeclareMathOperator{\I}{I}
\begin{document}

\onehalfspace

\title{Completion of the mixed unit interval graphs hierarchy}

\author{Alexandre Talon$^1$ \and Jan Kratochvíl$^2$}

\date{}

\maketitle

\begin{center}
$^1$
ENS Lyon, Lyon, France\\
\texttt{alexandre.talon@ens-lyon.org}

$^2$
Department of Applied Mathematics, Charles University, Prague, Czech Republic\\
\texttt{honza@kam.mff.cuni.cz}\\Supported by CE-ITI project GACR P202/12/G061.
\end{center}

\bigskip

\begin{abstract}
We describe the missing class of the hierarchy of mixed unit interval graphs. This class is generated by the intersection graphs of families of unit intervals that are allowed to be closed, open and left-closed-right-open. (By symmetry, considering closed, open, and right-closed-left-open unit intervals generates the same class.) We show that this class lies strictly between unit interval graphs and mixed unit interval graphs. We give a complete characterization of this new class, as well as quadratic-time algorithms that recognize graphs from this class and produce a corresponding interval representation if one exists. We also show that the algorithm from Shuchat et al. \cite{shuchat} directly extends to provide a quadratic-time algorithm to recognize the class of mixed unit interval graphs.

\bigskip

\noindent {\bf Keywords:} unit interval graph; mixed unit interval graph; proper interval graph; intersection graph
\end{abstract}
\bigskip

\section{Introduction}
A graph is an interval graph if one can associate with each of its vertices an interval of the real line such that two vertices are adjacent if and only if the corresponding intervals intersect. A well-studied subclass of the class of interval graphs is the one of proper interval graphs, generated by intersections graphs of families of unit intervals where no interval properly contains another one. This class coincides with the class of unit interval graphs~\cite{roberts}, that is when all intervals have length one.

However, in the previous descriptions no particular attention is paid to the types of intervals that are used: are they open, closed, or semi-closed?
Dourado, Le, Protti, Rautenbach and Szwarcfiter proved in \cite{dlprs} that this is of no importance as far as interval graphs are concerned. However, this is not true for unit interval graphs: deciding which types of intervals are allowed to represent the vertices of a graph is crucial. This fact was notably studied, chronogically, \cite{dlprs}, \cite{fm}, \cite{joos}, \cite{pmunit}, \cite{roberts}  and \cite{shuchat}. In these papers one can find results about the classes of graphs we can get depending on the types of unit intervals we allow for their representations. In particular, it is shown that if we require all the unit intervals used for representing a graph to be of the same type (either all closed, all open, or all semi-closed), one gets the same class of {\em unit interval graphs}. This class is a proper subclass of {\em mixed unit interval graphs} which are the graphs obtained when the intervals  are only required to be of unit length. Recently, Joos~\cite{joos} gave a characterization of mixed unit interval graphs by an infinite class of forbidden induced subgraphs, and Shuchat, Shull, Trenk and  West~\cite{shuchat} complemented it by a quadratic-time algorithm which, given any graph in this class, outputs a corresponding mixed unit interval reprensentation. In \cite{lr}, Le and Rautenbach took a different approach and studied the graphs which are representable by intervals beginning at integer positions.

The aim of this paper is to complete this hierarchy of classes of graphs representable using a specific subset of intervals of lengths one. After recalling the known results about the interval graphs hirerarchy, we show that for some $X$ the class of $\UU^X$-graphs coincide with the class of $\UU^{++}$-graphs and then show that -- with respect to this parametrization -- there exists exactly one more proper subclass of the class of mixed unit interval graphs: the class of $\UU^{\pm, +-}$-graphs. We characterize this class by an infinite list of forbidden induced subgraphs and give quadratic-time algorithms that check whether a graph belongs to this class, and in case it does, produce a corresponding appropriate interval representation.

\section{Preliminaries}
\subsection{Basic definitions and notations}
All the graphs we consider here are finite, undirected, and simple.
Let $G$ be a graph.
We denote the vertex and edge set of $G$ by $V(G)$ and $E(G)$, respectively, or by $V$ and $E$ if there are no ambiguities. We say that two vertices $u$ and $v$ are neighbors, or adjacent if $\{u,v\} \in E(G)$.

For a vertex $v\in V(G)$,
let the \emph{neighborhood} $N_G(v)$ of $v$ be the set of all vertices which are adjacent to $v$ and
let the \emph{closed neighborhood} $N_G[v]$ be defined as $N_G(v)\cup \{v\}$.
Two distinct vertices $u$ and $v$ are \emph{twins} (in $G$)
if $N_G[u]=N_G[v]$. If $G$ contains no twins, then $G$ is \emph{twin-free}.

If $C$ is a set of vertices,
then we denote by $G[C]$ the subgraph of $G$ induced by $C$.

Let $\mathcal{M}$ be a set of graphs.
We say that $G$ is $\mathcal{M}$-\emph{free}
if for every $H\in\mathcal{M}$, the graph $H$ is not an induced subgraph of $G$.

Let $\mathcal{N}$ be a family
of intervals.
We say that a graph $G$ has an $\mathcal{N}$-\emph{representation}
if there is a function $\I :V(G)\rightarrow \mathcal{N}$
such that any two distinct vertices $u$ and $v$
are adjacent if and only if $\I(u)\cap \I(v)\not=\emptyset$.
We say that $G$ is an $\mathcal{N}$-\emph{graph} if there is an $\mathcal{N}$-representation of $G$.

Let $x,y\in \mathbb{R}$. We define the \emph{closed interval} $[x,y]=\{z\in \mathbb{R}: x\leq z\leq y\}$, the \emph{open interval}
$(x,y)=\{z\in \mathbb{R}: x< z<y\}$, the \emph{closed-open interval}
$[x,y)=\{z\in \mathbb{R}: x\leq z< y\}$ and the \emph{open-closed interval}
$(x,y]=\{z\in \mathbb{R}: x< z\leq y\}$. All along this paper we draw these types of intervals as in \autoref{intervals-fig}

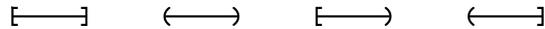
\begin{figure}[H]
\centering
\begin{tikzpicture}[yscale=0.4, thick,place2/.style={thick,
inner sep=0pt,minimum size=1mm}, place/.style={circle,draw=black,fill=black,thick,
inner sep=0pt,minimum size=1.5mm}]
\draw[{[-]}] (0,0) --(1,0);
\draw[(-)] (2,0) -- (3,0);
\draw[[-)] (4,0) -- (5,0);
\draw[{(-]}] (6,0) -- (7,0);
\draw[] (0,-1.1)--(0,-1.1);
\draw[] (0,0.3)--(0,0.3);
\end{tikzpicture}
\captionsetup{skip=0.3cm}
\caption{The closed, open, closed-open, and open-closed intervals.}
\label{intervals-fig}
\end{figure}

For an interval $A$,
let $\ell(A)=\inf(\{x\in \mathbb{R}: x\in A\})$ and $r(A)=\sup(\{x\in \mathbb{R}: x\in A\})$. We say that $A$ is a \emph{unit} interval if $r(A) = \ell(A)+1$.
If $\I$ is an interval representation of $G$ and $v\in V(G)$,
then we write $\ell(v)$ and $r(v)$ instead of $\ell(I(v))$ and $r(I(v))$
if there are no ambiguities. We refer to $\ell(A)$ and $r(A)$ as the \emph{extremities} of $A$. We say that $A$ and $B$ \emph{lie at the same position} if $\ell(A) = \ell(B)$.
We also say that a vertex $x$ has an \emph{integer interval} if $\ell(x) \in \ZZ$. We will denote such a vertex as an \emph{integer vertex}.\\
Let
$\mathcal{U}^{++}$ be the set of all closed unit intervals of the real line,
$\mathcal{U}^{--}$ be the set of all open unit intervals,
$\mathcal{U}^{-+}$ be the set of all open-closed unit intervals,
$\mathcal{U}^{+-}$ be the set of all closed-open unit intervals, and
$\mathcal{U}$ be the set of all unit intervals. We also define $\mathcal{U^{\pm}} = \mathcal{U}^{++}\cup\mathcal{U}^{--}$ and
$\displaystyle \mathcal{U}^X = \bigcup_{x \in X}{\mathcal{U}^{x}}$ for every $X \subseteq \mathcal{P}(\{++, --, -+, +-, \pm\})$. For instance, $\mathcal{U} = \mathcal{U}^{\pm, +-, -+}$.
In this terminology, $\mathcal{U}$-graphs are known as \emph{mixed unit interval graphs}. Let us also call a $\mathcal{U}^{\pm, +-}$-graph an \emph{almost-mixed unit interval graph}. We notice that by symmmetry, the class of almost-mixed unit interval graphs is also the one of $\mathcal{U}^{\pm, -+}$-graphs.

\subsection{Previous results}

First we can see that if a graph contains twins, then they can be assigned the same intervals, so in what follows we will mostly consider twin-free graphs.
We will denote by $\mathcal{G}^X$ the set of all twin-free $\mathcal{U}^X$-graphs.

We begin by recalling the known results on classifying and characterizing the unit interval graph classes.
The following two theorems characterize the most simple one.

\begin{thm}[Roberts \cite{roberts}]
A graph $G$ is a $\mathcal{U^{++}}$-graph if and only if it is a $K_{1,3}$-free interval graph.
\label{closed_unit_th}
\end{thm}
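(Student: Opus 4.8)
The plan is to prove the two implications of the biconditional separately: necessity by a short length estimate, and sufficiency by first producing a \emph{proper} interval representation of $G$ (one with no interval properly inside another) and then straightening it to a unit one.

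For necessity I would argue by contradiction. Given a closed unit interval representation $\I$ of $G$ together with an induced $K_{1,3}$ with centre $c$ and leaves $x,y,z$, the intervals $\I(x),\I(y),\I(z)$ are pairwise disjoint, hence linearly ordered on the line; after relabelling, say $r(x)<\ell(y)$ and $r(y)<\ell(z)$. Adjacency of $c$ to $x$ and to $z$ then gives $\ell(c)\le r(x)$ and $r(c)\ge\ell(z)$, whence
\[
r(c)-\ell(c)\ \ge\ \ell(z)-r(x)\ >\ r(y)-\ell(y)\ =\ 1,
\]
contradicting the unit length of $\I(c)$. (The same inequalities in fact give $\I(y)\subsetneq\I(c)$, and this is why the identical configuration is also forbidden in proper interval representations — the point that lets the converse be routed through proper representations.) Hence every $\mathcal{U}^{++}$-graph is a $K_{1,3}$-free interval graph.

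For sufficiency I would reduce, by treating connected components separately and placing their representations far apart, to the case $G$ connected; then fix any interval representation and look at its maximal cliques $A_1,\dots,A_m$ in their consecutive left-to-right order, recording for each vertex $v$ the index interval $Q(v)=\{i:v\in A_i\}=\{\alpha(v),\dots,\beta(v)\}$. The lemma I would prove is that no vertex has clique-span strictly inside another's, i.e.\ there are no $u,v$ with $\alpha(u)<\alpha(v)\le\beta(v)<\beta(u)$: if there were, then since $A_{\alpha(u)}$ and $A_{\beta(u)}$ are distinct maximal cliques they are not contained in $A_{\alpha(v)}$ resp.\ $A_{\beta(v)}$, so one can pick $x\in A_{\alpha(u)}\setminus A_{\alpha(v)}$ and $y\in A_{\beta(u)}\setminus A_{\beta(v)}$; these satisfy $\beta(x)<\alpha(v)$ and $\alpha(y)>\beta(v)$, so $x,v,y$ are pairwise non-adjacent while all three are adjacent to $u$ — an induced $K_{1,3}$. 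Granting the lemma, the integer assignment $v\mapsto[\alpha(v),\beta(v)]$ represents $G$ (since $u\sim v$ iff $Q(u)\cap Q(v)\ne\emptyset$) and has only ``degenerate'' containments, each sharing its left or its right endpoint; a small generic perturbation that, within each class of vertices with a common $\alpha$-value spreads the left endpoints in the order of their $\beta$-values (and symmetrically for $\beta$-classes and right endpoints) turns it into a proper representation. Finally, to pass from a proper representation to a unit one I would perturb to make the $2n$ endpoints distinct — now the left-endpoint order coincides with the right-endpoint order, so list the vertices as $v_1,\dots,v_n$ in that order — observe that the earlier neighbours of each $v_i$ form a suffix $v_{c(i)},\dots,v_{i-1}$ with $c$ non-decreasing, and place unit intervals $\I(v_i)=[t,t+1]$ greedily from left to right, each new $t$ taken just past the last earlier non-neighbour of $v_i$; monotonicity of $c$ is exactly what keeps the admissible range of $t$ non-empty at every step. (This last step is the classical ``proper $=$ unit'' equivalence; a short self-contained version is due to Bogart and West.)

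The heart of the argument — and the only place $K_{1,3}$-freeness is used — is the lemma in the sufficiency part: that a consecutive clique arrangement of a $K_{1,3}$-free interval graph has no strictly-nested clique-span. Its proof is the ``two-sided end-clique claw'' sketched above and is short once one thinks of looking at the two end cliques; I expect this to be the main obstacle conceptually. Everything else is bookkeeping, and the one point needing care is that the two perturbations and the greedy choices never disturb the already-fixed intersection pattern — which is automatic, since in the integer representation every non-adjacent pair of vertices is separated by a gap of length at least one.
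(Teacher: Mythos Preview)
The paper does not give a proof of this theorem; it is quoted, with a citation to Roberts, in the ``Previous results'' subsection and no argument follows. So there is nothing in the paper to compare against, and I can only assess your proposal on its own.

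Your argument is correct and is essentially the classical modern proof. The necessity computation is clean. For sufficiency, the key lemma --- that in a $K_{1,3}$-free interval graph no vertex has its clique-span strictly inside another's --- and its two-end-claw proof are right; you are tacitly invoking the Fulkerson--Gross consecutive-clique characterisation of interval graphs when you speak of ``the maximal cliques in their consecutive left-to-right order'', but that is standard. The one step that reads as a sketch is the ``small generic perturbation'' removing endpoint-sharing containments. In fact once you have the lemma you can bypass that step entirely: sorting the vertices by $\alpha$ with ties broken by $\beta$ gives an order $v_1,\dots,v_n$ in which, by the lemma, \emph{both} $\alpha$ and $\beta$ are non-decreasing, and that ordering is exactly the input your greedy unit placement needs (the earlier neighbours of each $v_i$ automatically form a suffix). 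So the detour through an explicit proper representation is harmless but unnecessary, and your proof goes through either way.
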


\begin{thm}[Dourado et al. \cite{dlprs}, Frankl and Maehara \cite{fm}]
The classes of $\mathcal{U}^{++}$-graphs,
$\mathcal{U}^{--}$-graphs,
$\mathcal{U}^{+-}$-graphs,
$\mathcal{U}^{-+}$-graphs, and
$\mathcal{U}^{+-, -+}$-graphs are the same.
\label{same_unit_th}
\end{thm}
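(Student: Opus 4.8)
**Proof proposal for the theorem that $\mathcal{U}^{++}$, $\mathcal{U}^{--}$, $\mathcal{U}^{+-}$, $\mathcal{U}^{-+}$, and $\mathcal{U}^{+-,-+}$ all coincide:**

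The plan is to show a cycle of inclusions, most of which are almost trivial by symmetry, and to do the only genuine work in transforming a representation by half-open intervals of both kinds into a representation by closed intervals only. First I would dispose of the easy symmetries. Reflecting the real line through the origin, $x \mapsto -x$, turns every closed-open interval into an open-closed one and vice versa, and fixes the classes of closed and of open intervals setwise; hence the $\mathcal{U}^{+-}$-graphs and the $\mathcal{U}^{-+}$-graphs are literally the same class, and likewise nothing is lost by treating $[x,y)$ and $(x,y]$ symmetrically throughout. This also shows $\mathcal{U}^{+-,-+}$ is closed under this reflection. Next, every closed interval $[x,x+1]$ and every open interval $(x,x+1)$ can be perturbed: to go from a $\mathcal{U}^{++}$-representation to a $\mathcal{U}^{--}$-one (and back), one uses the standard fact — already implicit in Roberts' theorem and the finiteness of the graph — that a closed representation can be "opened up'' by enlarging each interval by a tiny uniform $\varepsilon$ so that exactly the same pairs intersect, and symmetrically a finite open representation can be "closed down'' by shrinking. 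So $\mathcal{U}^{++}$-graphs $=\mathcal{U}^{--}$-graphs, and both are contained in $\mathcal{U}^{+-}$-graphs $=\mathcal{U}^{-+}$-graphs $\subseteq \mathcal{U}^{+-,-+}$-graphs trivially.

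The substantive direction is $\mathcal{U}^{+-,-+}$-graphs $\subseteq \mathcal{U}^{++}$-graphs: given a representation $\I$ using only intervals of the forms $[x,x+1)$ and $(x,x+1]$, I must produce a representation by closed unit intervals. The key observation is that a closed-open interval $[x,x+1)$ and an open-closed interval $(y,y+1]$ behave, at their shared "open'' endpoints, exactly like open intervals, while their "closed'' endpoints behave like closed ones; what one has to rule out is the configuration that genuinely distinguishes mixed representations, namely two intervals meeting only in a single point that is open in one and closed in the other. Concretely, since the graph is finite, I would first normalize all endpoints to lie in a finite set and argue that the only way an adjacency or non-adjacency can depend on interval type is through coincidences $r(u) = \ell(v)$. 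For a closed-open $[a,a+1)$, the right end $a+1$ is not in the interval, so it contributes a non-edge with anything whose left end is exactly $a+1$; but a closed-open interval's left end $a$ *is* in it. Thus along the line, reading left to right, a closed-open interval "looks closed on the left and open on the right,'' and an open-closed one "looks open on the left and closed on the right.'' I claim one can then sort the intervals by left endpoint and apply a uniform shift argument: perturb each interval's endpoints by distinct infinitesimals chosen so that no two endpoints coincide, in the order dictated by which side is closed, so that every former point-contact is resolved into either a genuine overlap or a genuine gap consistently with the original adjacency. After this perturbation no adjacency depends on endpoint type at all, so replacing every interval by the closed interval with the same (perturbed) endpoints changes nothing, giving a $\mathcal{U}^{++}$-representation.

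The main obstacle is precisely the bookkeeping in that last perturbation: one must check that the infinitesimal shifts can be chosen *coherently* — a single interval has two endpoints, and the shift forced at its left end (to preserve edges/non-edges there) must be compatible with the shift forced at its right end, simultaneously for all intervals. The clean way to handle this is to observe that among unit intervals, $\ell(u) = r(v)$ forces $r(u) = \ell(v) + 1$ is impossible to also have be a coincidence with a *third* interval in a conflicting way unless the graph contains a forbidden small subgraph; more robustly, one can process the at-most-$2n$ critical values from left to right and at each one push the "open-on-this-side'' intervals slightly outward and the "closed-on-this-side'' intervals slightly inward by an $\varepsilon$ small enough relative to the minimum gap between distinct critical values, and verify by a short case analysis on the four interval types that each edge and non-edge of $G$ is preserved. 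Since $G$ is finite this terminates, and the resulting representation uses only closed unit intervals, completing the cycle of inclusions.
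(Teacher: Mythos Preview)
The paper does not give its own proof of this theorem: it is stated as a known result and attributed to Dourado et al.\ and Frankl--Maehara with no argument supplied. So there is nothing to compare your proposal \emph{to} in the paper itself, beyond noting that the paper's proof of the closely related Theorem~5 proceeds by an entirely different route (show that $K_{1,3}$ cannot be represented without using both a closed and an open interval, hence every class in the list is $K_{1,3}$-free, hence by Roberts equals $\mathcal{U}^{++}$).

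Your easy directions are fine in spirit, though a little loose: ``enlarging each closed interval by $\varepsilon$'' does not yield unit intervals, so you must follow with a rescaling of the line (or, equivalently, scale all left endpoints by $1-\varepsilon$); you should say so explicitly. The genuine gap is in the hard direction $\mathcal{U}^{+-,-+}\subseteq\mathcal{U}^{++}$. You correctly identify the obstacle --- a single shift of an interval moves both its endpoints, so the perturbation forced at the left end may conflict with the one forced at the right end --- but you do not actually resolve it. Your proposed fix, ``process the critical values left to right and push open-on-this-side intervals outward and closed-on-this-side intervals inward'', does not work as stated: for two intervals of the \emph{same} type at distance exactly~$1$, say $[a,a+1)$ and $[a+1,a+2)$, any uniform-by-type shift leaves their distance equal to~$1$, so after closing the endpoints you have created an edge that was not there. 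One needs genuinely distinct shifts per interval, chosen inductively so that at every critical value the relative order of the perturbed endpoints matches the open/closed pattern, and you have not carried out that induction or shown it terminates consistently.

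A much shorter and fully rigorous argument is available and is essentially what the paper uses for its Theorem~5: observe directly (as in the paper's analysis of $K_{1,3}$) that any $\mathcal{U}$-representation of $K_{1,3}$ must use both a closed and an open interval, so $K_{1,3}$ is not a $\mathcal{U}^{+-,-+}$-graph; every $\mathcal{U}^{+-,-+}$-graph is therefore a $K_{1,3}$-free interval graph, and Roberts' theorem finishes the job. This bypasses the perturbation bookkeeping entirely.
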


The next theorem characterizes the set of twin-free graphs of the class of $\mathcal{U}^{\pm}$-graphs, that is when we allow both closed and open intervals but no others. This class is the first superclass of the class of $\mathcal{U}^{++}$-graphs.
\begin{thm}[Rautenbach and Szwarcfiter \cite{pmunit}]
\label{pm_forbidden_th}
A graph $G$ is in $\GGpm$ if and only if $G$ is a $\{K_{1,4},K_{1,4}^*,K_{2,3}^*,K_{2,4}^*\}$-free interval graph.
\end{thm}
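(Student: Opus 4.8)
I would prove the two implications separately; write $\mathcal{F}=\{K_{1,4},K_{1,4}^*,K_{2,3}^*,K_{2,4}^*\}$.

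\textbf{Necessity.} The class of $\mathcal{U}^{\pm}$-graphs is hereditary (restrict a representation to an induced subgraph) and every $\mathcal{U}^{\pm}$-graph is an interval graph, so it suffices to check that none of the four graphs in $\mathcal{F}$ admits a $\mathcal{U}^{\pm}$-representation -- each of them is itself an interval graph, so all four exclusions are genuinely needed. For each $H\in\mathcal{F}$ I would assume a representation exists and derive a contradiction by ordering the intervals by their left endpoints and counting. The model case is $K_{1,4}$: its four leaves receive pairwise disjoint unit intervals, each meeting the unit interval of the centre, and if $\ell_1<\ell_2<\ell_3<\ell_4$ are their left endpoints then disjointness forces $\ell_4\ge\ell_1+3$ while meeting a common unit interval forces $\ell_4-\ell_1\le 2$ -- impossible, and this is insensitive to which intervals are open or closed. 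The three starred graphs are handled by the same endpoint bookkeeping applied to the (at most unit-length) region where two adjacent ``central'' intervals overlap, using additionally the fact that an open interval cannot be adjacent to a neighbour through one of its endpoints.

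\textbf{Sufficiency.} This is the real content. Let $G$ be a twin-free $\mathcal{F}$-free interval graph; one may assume $G$ connected and treat components separately, and twins are reattached at the end by copying intervals. I would start from the canonical clique-path structure of $G$: a linear ordering $Q_1,\dots,Q_m$ of its maximal cliques in which, for every vertex, the cliques containing it form a contiguous block. The plan is to sweep the cliques from left to right and assign each vertex a unit interval whose position mirrors its block of cliques, arguing by induction on $m$ with the invariant that the intervals placed so far form a valid $\mathcal{U}^{\pm}$-representation of the already-processed induced subgraph and leave suitable ``room'' at the right. By Roberts' theorem the only local obstacle to using closed unit intervals throughout is an induced claw $K_{1,3}$; in the clique path such a claw appears as a vertex together with neighbours that must sit pairwise non-adjacently around a single transition between consecutive cliques, and it is resolved by making exactly one of the three intervals meeting at that transition open -- exactly as in the representation $v=[0,1]$, $a=[-1,0]$, $b=[1,2]$, $c=(0,1)$ of a lone claw. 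Here is where $\mathcal{F}$ enters: $K_{1,4}$-freeness ensures that at most three mutually non-adjacent intervals ever pile up at one transition, so a single open interval per transition suffices locally, and the three starred graphs are precisely what forbids the global conflicts -- configurations in which one interval would be forced to be open at both ends while still required to touch a neighbour through an endpoint, or in which two transitions lie so close that one interval faces incompatible demands. The inductive step is then a case analysis on how the vertices of $Q_{i+1}\setminus Q_i$ attach to the current right end, each problematic configuration being ruled out by exhibiting one of the four graphs in $\mathcal{F}$.

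\textbf{Main obstacle.} The difficulty is entirely in the sufficiency direction and is twofold: pinning down the right inductive invariant (enough ``room'' to the right, recorded precisely enough that an open-interval fix made locally never has to be undone), and then checking that the four excluded subgraphs are \emph{exactly} the obstructions to patching the local fixes together consistently. The latter is an unavoidable, somewhat lengthy case analysis over the structure of the clique path; everything else -- heredity, the interval-graph containment, the counting for necessity, and the twin reattachment -- is routine.
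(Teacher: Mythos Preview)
The paper does not prove this theorem. It appears in the ``Previous results'' subsection, is attributed to Rautenbach and Szwarcfiter~\cite{pmunit}, and is quoted without argument; the present paper only uses it as background. There is therefore no proof here to compare your proposal against.

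For what it is worth, your outline is a sensible plan. The necessity direction is essentially complete as you wrote it (the endpoint-counting for $K_{1,4}$ is correct, and the same bookkeeping adapted to the overlap region of the two central vertices handles the starred graphs). For sufficiency, a sweep along the clique path with local open/closed adjustments is indeed the natural route, but what you have written is a strategy, not a proof: the ``case analysis over the structure of the clique path'' that you identify as the main obstacle \emph{is} the content of the theorem, and you have not carried it out, nor shown that $\{K_{1,4},K_{1,4}^*,K_{2,3}^*,K_{2,4}^*\}$ exhausts the obstructions. If your goal is merely to cite the result, as the paper does, nothing more is needed; if you intend to supply an independent proof, the case analysis must be written out in full.
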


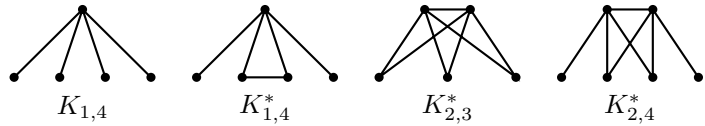
\begin{figure}[H]
\begin{center}
\begin{tikzpicture}[scale=0.6]
\def\ver{0.1} 
\def\x{1}

\def\xa{0}
\def\ya{0}

\def\xb{4}
\def\yb{0}

\def\xc{8}
\def\yc{0}

\def\xd{12}
\def\yd{0}

\path[fill] (\xa,\ya) circle (\ver);
\path[fill] (\xa+1,\ya) circle (\ver);
\path[fill] (\xa+2,\ya) circle (\ver);
\path[fill] (\xa+3,\ya) circle (\ver);
\path[fill] (\xa+1.5,\ya+1.5) circle (\ver);

\draw[thick] (\xa,\ya)--(\xa+1.5,\ya+1.5)
(\xa+1,\ya)--(\xa+1.5,\ya+1.5)
(\xa+2,\ya)--(\xa+1.5,\ya+1.5)
(\xa+3,\ya)--(\xa+1.5,\ya+1.5);

\node (1) at (\xa+1.5,\ya-0.7) {$K_{1,4}$};

\path[fill] (\xb,\yb) circle (\ver);
\path[fill] (\xb+1,\yb) circle (\ver);
\path[fill] (\xb+2,\yb) circle (\ver);
\path[fill] (\xb+3,\yb) circle (\ver);
\path[fill] (\xb+1.5,\yb+1.5) circle (\ver);

\draw[thick] (\xb,\yb)--(\xb+1.5,\yb+1.5)
(\xb+2,\yb)--(\xb+1,\yb)--(\xb+1.5,\yb+1.5)
(\xb+2,\yb)--(\xb+1.5,\yb+1.5)
(\xb+3,\yb)--(\xb+1.5,\yb+1.5);

\node (1) at (\xb+1.5,\yb-0.7) {$K_{1,4}^*$};

\path[fill] (\xc,\yc) circle (\ver);
\path[fill] (\xc+1.5,\yc) circle (\ver);
\path[fill] (\xc+3,\yc) circle (\ver);
\path[fill] (\xc+1,\yc+1.5) circle (\ver);
\path[fill] (\xc+2,\yc+1.5) circle (\ver);

\draw[thick] (\xc,\yc)--(\xc+1,\yc+1.5)--(\xc+1.5,\yc)--(\xc+2,\yc+1.5)--
(\xc+3,\yc)--(\xc+1,\yc+1.5)--(\xc+2,\yc+1.5)--(\xc,\yc);

\node (1) at (\xc+1.5,\yc-0.7) {$K_{2,3}^*$};

\path[fill] (\xd,\yd) circle (\ver);
\path[fill] (\xd+1,\yd) circle (\ver);
\path[fill] (\xd+2,\yd) circle (\ver);
\path[fill] (\xd+3,\yd) circle (\ver);
\path[fill] (\xd+1,\yd+1.5) circle (\ver);
\path[fill] (\xd+2,\yd+1.5) circle (\ver);

\draw[thick] (\xd,\yd)--(\xd+1,\yd+1.5)--
(\xd+1,\yd)--(\xd+2,\yd+1.5)--
(\xd+2,\yd)--(\xd+1,\yd+1.5)--
(\xd+2,\yd+1.5)--(\xd+3,\yd);

\node (1) at (\xd+1.5,\yd-0.7) {$K_{2,4}^*$};

\end{tikzpicture}
\end{center}
\captionsetup{skip=-0.1cm}
\caption{Forbidden induced subgraphs for twin-free $\mathcal{U}^{\pm}$-graphs.}\label{upmgraph}
\end{figure}
It is easy to see that the classes of $\mathcal{U^{\pm}}$-graphs and $\mathcal{U}^{++}$-graphs are not the same. Indeed, $K_{1,3}$ is a $\mathcal{U}^{\pm}$-graph but not a $\mathcal{U}^{++}$-graph.
A characterization of (twin-free) $\mathcal{U}$-graphs was recently given by Joos (the classes $\mathcal{R}$, $\mathcal{S}$, $\mathcal{S'}$, and $\mathcal{T}$ of forbidden induced subgraphs are depicted in Figure~\ref{graphsR}--\ref{graphsT}).

\begin{thm}[Joos \cite{joos}]
A twin-free graph $G$ is in $\GG$ if and only if $G$ is a $\{K^{*}_{2,3}\}\cup \mathcal{R} \cup \mathcal{S} \cup \mathcal{S'} \cup \mathcal{T} $-free interval graph.
\label{joos_th}
\end{thm}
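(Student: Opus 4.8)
\medskip
\noindent\textbf{Proof strategy.}
The plan is to prove the two implications separately. For the easy direction --- that a $\mathcal{U}$-graph must be an interval graph avoiding all the listed subgraphs --- I would first note that the class of $\mathcal{U}$-graphs is hereditary under taking induced subgraphs (just restrict the representation), and that every unit interval is an interval, so a $\mathcal{U}$-graph is certainly an interval graph. It then remains to verify that none of $K^{*}_{2,3}$ and no graph in $\mathcal{R}\cup\mathcal{S}\cup\mathcal{S'}\cup\mathcal{T}$ is a $\mathcal{U}$-graph. For the finitely many small graphs on the list I would argue by contradiction: assume a unit representation, move all endpoints into general position and rescale so that every interval has length exactly $1$; the prescribed adjacencies and non-adjacencies then force a linear order on the relevant endpoints together with open/closed constraints at the ``tight'' contacts, and a short case check rules out every possibility. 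For the infinite families the members are built by elongating a fixed gadget along a path of maximal cliques, so I would replace the case check by a propagation argument: moving from one maximal clique to the next, the left endpoints of the intervals crossing into that clique are confined to a window shorter than a full interval, and iterating this estimate along the long part of the gadget produces an irreconcilable conflict between its two ends. This is the step where equal (as opposed to arbitrary) length is essential.

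The substantial direction is the converse: given a twin-free interval graph $G$ that is $\{K^{*}_{2,3}\}\cup\mathcal{R}\cup\mathcal{S}\cup\mathcal{S'}\cup\mathcal{T}$-free, I must construct a $\mathcal{U}$-representation. I would start from the linear order $Q_1,\dots,Q_k$ of the maximal cliques of $G$ (which exists because $G$ is an interval graph) and build the representation greedily from left to right: at step $i$ I fix a ``clique point'' carried by all vertices of $Q_i$, place and set the open/closed flag of the left endpoints of the vertices whose leftmost clique is $Q_i$, and place the right endpoints of the vertices whose rightmost clique is $Q_i$. Since every interval must have length $1$, fixing a left endpoint determines the right endpoint, so the only remaining freedom is the fine placement of endpoints between consecutive clique points and the open/closed status of each endpoint --- and a vertex is forced to be open at an end exactly when closing it would create a spurious contact with a non-neighbour.

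The core of the argument is then a sequence of structural lemmas establishing that this construction never gets stuck, and that whenever a conflict does arise --- no admissible position for some clique point, or an over-determined open/closed parity at two nearby endpoints --- one of the forbidden subgraphs must already be present in $G$. Concretely, I would track the ``pressure'' exerted by vertices spanning several consecutive cliques: if too many long intervals have to be arranged with a prescribed nesting/crossing pattern inside a short window, a copy of $K^{*}_{2,3}$ or of a member of $\mathcal{R}$ can be extracted, while the families $\mathcal{S}$, $\mathcal{S'}$ and $\mathcal{T}$ account for the residual parity conflicts at the ends of such windows. An essentially equivalent route is induction on $|V(G)|$: delete a vertex $v$ that is extreme in the clique order, check that $G-v$ is still forbidden-subgraph-free, apply the induction hypothesis, and then reinsert $v$, showing that any obstruction to the insertion exhibits a forbidden subgraph of $G$ through $v$.

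I expect the main obstacle to be precisely the completeness of the forbidden list in this converse direction: one has to be certain that \emph{every} way in which the construction can fail is witnessed by a graph on the list, which requires a careful and fairly long exhaustive analysis of the possible local configurations of maximal cliques and of the type (open or closed at each end) that each interval is compelled to take. By contrast, the forward direction --- and extracting an efficient recognition algorithm from the structural description --- should be comparatively routine once this core is in place.
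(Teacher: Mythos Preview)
The theorem you are trying to prove is not proved in this paper at all: it is stated in Section~2.2 (``Previous results'') as a result of Joos, with the citation \cite{joos}, and is used only as background for the paper's own contribution on the class $\mathcal{G}^{\pm,+-}$. There is therefore no proof in the paper to compare your proposal against.

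That said, your outline is a reasonable sketch of how such characterization theorems are typically proved, and is broadly in the spirit of Joos's original argument: the forward direction is hereditary plus individual checks that each listed graph has no $\mathcal{U}$-representation, and the converse builds a representation along the linear order of maximal cliques, with the forbidden families accounting precisely for the local obstructions. Your own assessment of the difficulty is accurate: the forward direction is routine, while the exhaustive case analysis showing that every failure of the greedy construction produces one of the listed subgraphs is the real content and is lengthy. If you want to carry this out in full you should consult Joos's paper directly, since the present paper neither reproduces nor sketches that analysis.
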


\begin{figure}[H]
\centering
\begin{tikzpicture}[scale=0.6]
\def\ver{0.1} 
\def\x{1}

\def\xa{0.5}
\def\ya{0}

\def\xb{4}
\def\yb{0}

\def\xc{8}
\def\yc{0}

\def\xd{10}
\def\yd{0}

\path[fill] (\xa+0.5,\ya) circle (\ver);
\path[fill] (\xa+1,\ya+1) circle (\ver);
\path[fill] (\xa+2,\ya+1) circle (\ver);
\path[fill] (\xa+2.5,\ya) circle (\ver);
\path[fill] (\xa+1.5,\ya) circle (\ver);

\draw[thick] (\xa+0.5,\ya)--(\xa+1.5,\ya)--(\xa+1,\ya+1)
(\xa+2,\ya+1)--(\xa+1.5,\ya)--(\xa+2.5,\ya);

\node (1) at (\xa+1.5,\ya-0.7) {$R_0$};

\path[fill] (\xb,\yb) circle (\ver);
\path[fill] (\xb+1,\yb) circle (\ver);
\path[fill] (\xb+2,\yb) circle (\ver);
\path[fill] (\xb+3,\yb) circle (\ver);
\path[fill] (\xb+0.5,\yb+1) circle (\ver);
\path[fill] (\xb+1.5,\yb+1) circle (\ver);
\path[fill] (\xb+2.5,\yb+1) circle (\ver);

\draw[thick] (\xb,\yb)--(\xb+1,\yb)--(\xb+2,\yb)--(\xb+3,\yb)
(\xb+0.5,\yb+1)--(\xb+1,\yb)--(\xb+1.5,\yb+1)--(\xb+2,\yb)--(\xb+2.5,\yb+1);

\node (1) at (\xb+1.5,\yb-0.7) {$R_1$};

\path[fill] (\xd,\yd) circle (\ver);
\path[fill] (\xd+1,\yd) circle (\ver);
\path[fill] (\xd+2,\yd) circle (\ver);
\path[fill] (\xd+3,\yd) circle (\ver);
\path[fill] (\xd+4.5,\yd) circle (\ver);
\path[fill] (\xd+5.5,\yd) circle (\ver);
\path[fill] (\xd+6.5,\yd) circle (\ver);
\path[fill] (\xd+0.5,\yd+1) circle (\ver);
\path[fill] (\xd+1.5,\yd+1) circle (\ver);
\path[fill] (\xd+2.5,\yd+1) circle (\ver);
\path[fill] (\xd+5,\yd+1) circle (\ver);
\path[fill] (\xd+6,\yd+1) circle (\ver);

\fill (\xd+3.35,\yd) circle (\ver/2);
\fill (\xd+3.75,\yd) circle (\ver/2);
\fill (\xd+4.15,\yd) circle (\ver/2);

\draw[thick] (\xd,\yd)--(\xd+3,\yd)
(\xd+4.5,\yd)--(\xd+6.5,\yd)
(\xd+0.5,\yd+1)--(\xd+1,\yd)--(\xd+1.5,\yd+1)--(\xd+2,\yd)--(\xd+2.5,\yd+1)--(\xd+3,\yd)
(\xd+4.5,\yd)--(\xd+5,\yd+1)--(\xd+5.5,\yd)--(\xd+6,\yd+1);

\draw[thick,decoration={brace,mirror,raise=0.2cm},decorate] (\xd+1,\yd) -- (\xd+5.5,\yd)
node [pos=0.5,anchor=north,yshift=-0.4cm] {$i$ triangles};

\node (1) at (\xd+6,\yd-0.7) {$R_i$};





\end{tikzpicture}

\captionsetup{skip=0cm}

\caption{The class $\mathcal{R}$.}\label{graphsR}
\end{figure}
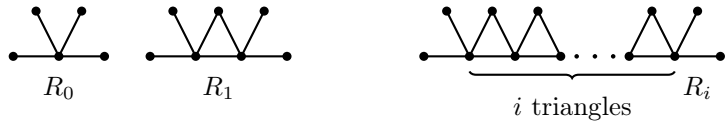

\begin{figure}[H]
 \centering

\begin{tikzpicture}[scale=0.6]
\def\ver{0.1} 
\def\x{1}

\def\xa{0}
\def\ya{0}

\def\xb{0}
\def\yb{0}

\def\xc{4}
\def\yc{0}

\def\xd{9}
\def\yd{0}

\path[fill] (\xb+1,\yb) circle (\ver);
\path[fill] (\xb+2,\yb) circle (\ver);
\path[fill] (\xb+3,\yb) circle (\ver);
\path[fill] (\xb+2,\yb+1.5) circle (\ver);
\path[fill] (\xb+1.5,\yb+1) circle (\ver);
\path[fill] (\xb+2.5,\yb+1) circle (\ver);

\draw[thick] (\xb+1,\yb)--(\xb+2,\yb)--(\xb+3,\yb)
(\xb+1,\yb)--(\xb+1.5,\yb+1)--(\xb+2,\yb)--(\xb+2.5,\yb+1)
--(\xb+2,\yb+1.5)--(\xb+2,\yb)
(\xb+1.5,\yb+1)--(\xb+2,\yb+1.5);

\node (1) at (\xb+2,\yb-0.7) {$S_1$};

\path[fill] (\xc,\yc) circle (\ver);
\path[fill] (\xc+1,\yc) circle (\ver);
\path[fill] (\xc+2,\yc) circle (\ver);
\path[fill] (\xc+3,\yc) circle (\ver);
\path[fill] (\xc+0.5,\yc+1) circle (\ver);
\path[fill] (\xc+1.5,\yc+1) circle (\ver);
\path[fill] (\xc+2.5,\yc+1) circle (\ver);
\path[fill] (\xc+1,\yc+1.5) circle (\ver);

\draw[thick] (\xc,\yc)--(\xc+1,\yc)--(\xc+2,\yc)--(\xc+3,\yc)
(\xc+0.5,\yc+1)--(\xc+1,\yc)--(\xc+1.5,\yc+1)--(\xc+2,\yc)--(\xc+2.5,\yc+1)
(\xc,\yc)--(\xc+0.5,\yc+1)--(\xc+1,\yc+1.5)--(\xc+1.5,\yc+1)
(\xc+1,\yc)--(\xc+1,\yc+1.5);

\node (1) at (\xc+1.5,\yc-0.7) {$S_2$};

\path[fill] (\xd-1,\yd) circle (\ver);
\path[fill] (\xd-0.5,\yd+1) circle (\ver);
\path[fill] (\xd,\yd) circle (\ver);
\path[fill] (\xd+1,\yd) circle (\ver);
\path[fill] (\xd+2,\yd) circle (\ver);
\path[fill] (\xd+3.5,\yd) circle (\ver);
\path[fill] (\xd+4.5,\yd) circle (\ver);
\path[fill] (\xd+5.5,\yd) circle (\ver);
\path[fill] (\xd+0.5,\yd+1) circle (\ver);
\path[fill] (\xd+1.5,\yd+1) circle (\ver);
\path[fill] (\xd+4,\yd+1) circle (\ver);
\path[fill] (\xd+5,\yd+1) circle (\ver);
\path[fill] (\xd,\yd+1.5) circle (\ver);

\fill (\xd+2.35,\yd) circle (\ver/2);
\fill (\xd+2.75,\yd) circle (\ver/2);
\fill (\xd+3.15,\yd) circle (\ver/2);

\draw[thick] (\xd-1,\yd)--(\xd+2,\yd)
(\xd+3.5,\yd)--(\xd+5.5,\yd)
(\xd-1,\yd)--(\xd-0.5,\yd+1)--(\xd,\yd)--(\xd+0.5,\yd+1)--(\xd+1,\yd)--(\xd+1.5,\yd+1)--(\xd+2,\yd)
(\xd+3.5,\yd)--(\xd+4,\yd+1)--(\xd+4.5,\yd)--(\xd+5,\yd+1)
(\xd-0.5,\yd+1)--(\xd,\yd+1.5)--(\xd+0.5,\yd+1)
(\xd,\yd)--(\xd,\yd+1.5);

\draw[thick,decoration={brace,mirror,raise=0.2cm},decorate] (\xd-1,\yd) -- (\xd+4.5,\yd)
node [pos=0.5,anchor=north,yshift=-0.4cm] {$i$ triangles};

\node (1) at (\xd+4.5,\yd-0.7) {$S_i$};

\end{tikzpicture}

\caption{The class $\mathcal{S}$.}\label{graphsS}
\end{figure}

\begin{figure}[H]
\centering


\begin{tikzpicture}[scale=0.6]
\def\ver{0.1} 
\def\x{1}

\def\xa{4}
\def\ya{0}

\def\xb{0}
\def\yb{0}

\def\xc{8}
\def\yc{0}

\path[fill] (\xa,\ya) circle (\ver);
\path[fill] (\xa+1,\ya) circle (\ver);
\path[fill] (\xa+2,\ya) circle (\ver);
\path[fill] (\xa+3,\ya) circle (\ver);
\path[fill] (\xa+0.5,\ya+1) circle (\ver);
\path[fill] (\xa+1.5,\ya+1) circle (\ver);
\path[fill] (\xa+2.5,\ya+1) circle (\ver);
\path[fill] (\xa+0.5,\ya+0.5) circle (\ver);

\draw[thick] (\xa,\ya)--(\xa+1,\ya)--(\xa+2,\ya)--(\xa+3,\ya)
(\xa+0.5,\ya+1)--(\xa+1,\ya)--(\xa+1.5,\ya+1)--(\xa+2,\ya)--(\xa+2.5,\ya+1)
(\xa,\ya)--(\xa+0.5,\ya+1)--(\xa+0.5,\ya+0.5)--(\xa,\ya)
(\xa+0.5,\ya+0.5)--(\xa+1,\ya);

\node (1) at (\xa+1.5,\ya-0.7) {$S_2'$};

\path[fill] (\xb+1,\yb) circle (\ver);
\path[fill] (\xb+2,\yb) circle (\ver);
\path[fill] (\xb+3,\yb) circle (\ver);
\path[fill] (\xb+1.5,\yb+1) circle (\ver);
\path[fill] (\xb+2.5,\yb+1) circle (\ver);
\path[fill] (\xb+1.5,\yb+0.5) circle (\ver);

\draw[thick] (\xb+1,\yb)--(\xb+2,\yb)--(\xb+3,\yb)
(\xb+1,\yb)--(\xb+1.5,\yb+1)--(\xb+2,\yb)--(\xb+2.5,\yb+1)
(\xb+1.5,\yb+1)--(\xb+1.5,\yb+0.5)--(\xb+1,\yb)
(\xb+1.5,\yb+0.5)--(\xb+2,\yb);

\node (1) at (\xb+2,\yb-0.7) {$S_1'$};

\path[fill] (\xc+1,\yc) circle (\ver);
\path[fill] (\xc+2,\yc) circle (\ver);
\path[fill] (\xc+3,\yc) circle (\ver);
\path[fill] (\xc+4.5,\yc) circle (\ver);
\path[fill] (\xc+5.5,\yc) circle (\ver);
\path[fill] (\xc+6.5,\yc) circle (\ver);
\path[fill] (\xc+1.5,\yc+1) circle (\ver);
\path[fill] (\xc+2.5,\yc+1) circle (\ver);
\path[fill] (\xc+5,\yc+1) circle (\ver);
\path[fill] (\xc+6,\yc+1) circle (\ver);
\path[fill] (\xc+1.5,\yc+0.5) circle (\ver);

\fill (\xc+3.35,\yc) circle (\ver/2);
\fill (\xc+3.75,\yc) circle (\ver/2);
\fill (\xc+4.15,\yc) circle (\ver/2);

\draw[thick] (\xc+1,\yc)--(\xc+3,\yc)
(\xc+4.5,\yc)--(\xc+6.5,\yc)
(\xc+1,\yc)--(\xc+1.5,\yc+1)--(\xc+2,\yc)--(\xc+2.5,\yc+1)--(\xc+3,\yc)
(\xc+4.5,\yc)--(\xc+5,\yc+1)--(\xc+5.5,\yc)--(\xc+6,\yc+1)
(\xc+1.5,\yc+1)--(\xc+1.5,\yc+0.5)--(\xc+1,\yc)
(\xc+1.5,\yc+0.5)--(\xc+2,\yc);

\draw[thick,decoration={brace,mirror,raise=0.2cm},decorate] (\xc+1,\yc) -- (\xc+5.5,\yc)
node [pos=0.5,anchor=north,yshift=-0.4cm] {$i$ triangles};

\node (1) at (\xc+6,\yc-0.7) {$S_i'$};

\end{tikzpicture}

\caption{The class $\mathcal{S}'$.}\label{graphsS'}
\end{figure}

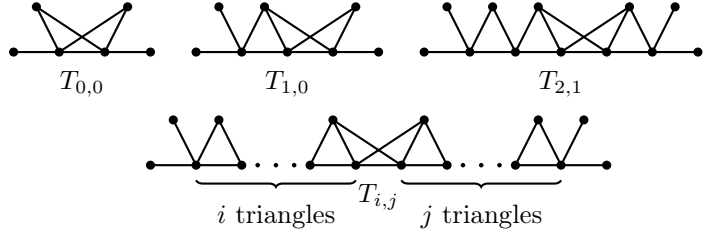
\begin{figure}[H]
\centering


\begin{tikzpicture}[scale=0.6]
\def\ver{0.1} 
\def\x{1}

\def\xa{0}
\def\ya{0}

\def\xb{4}
\def\yb{0}

\def\xc{9}
\def\yc{0}

\def\xd{3}
\def\yd{-2.5}

\path[fill] (\xa,\ya) circle (\ver);
\path[fill] (\xa+1,\ya) circle (\ver);
\path[fill] (\xa+2,\ya) circle (\ver);
\path[fill] (\xa+3,\ya) circle (\ver);
\path[fill] (\xa+0.5,\ya+1) circle (\ver);
\path[fill] (\xa+2.5,\ya+1) circle (\ver);

\draw[thick] (\xa,\ya)--(\xa+3,\ya)
(\xa+1,\ya)--(\xa+0.5,\ya+1)--(\xa+2,\ya)
(\xa+1,\ya)--(\xa+2.5,\ya+1)--(\xa+2,\ya);

\node (1) at (\xa+1.5,\ya-0.7) {$T_{0,0}$};

\path[fill] (\xb,\yb) circle (\ver);
\path[fill] (\xb+1,\yb) circle (\ver);
\path[fill] (\xb+2,\yb) circle (\ver);
\path[fill] (\xb+3,\yb) circle (\ver);
\path[fill] (\xb+4,\yb) circle (\ver);
\path[fill] (\xb+0.5,\yb+1) circle (\ver);
\path[fill] (\xb+1.5,\yb+1) circle (\ver);
\path[fill] (\xb+3.5,\yb+1) circle (\ver);

\draw[thick] (\xb,\yb)--(\xb+4,\yb)
(\xb+0.5,\yb+1)--(\xb+1,\yb)--(\xb+1.5,\yb+1)--(\xb+2,\yb)--(\xb+3.5,\yb+1)
--(\xb+3,\yb)--(\xb+1.5,\yb+1);

\node (1) at (\xb+2,\yb-0.7) {$T_{1,0}$};

\path[fill] (\xc,\yc) circle (\ver);
\path[fill] (\xc+1,\yc) circle (\ver);
\path[fill] (\xc+2,\yc) circle (\ver);
\path[fill] (\xc+3,\yc) circle (\ver);
\path[fill] (\xc+4,\yc) circle (\ver);
\path[fill] (\xc+5,\yc) circle (\ver);
\path[fill] (\xc+6,\yc) circle (\ver);
\path[fill] (\xc+0.5,\yc+1) circle (\ver);
\path[fill] (\xc+1.5,\yc+1) circle (\ver);
\path[fill] (\xc+2.5,\yc+1) circle (\ver);
\path[fill] (\xc+4.5,\yc+1) circle (\ver);
\path[fill] (\xc+5.5,\yc+1) circle (\ver);

\draw[thick] (\xc,\yc)--(\xc+6,\yc)
(\xc+0.5,\yc+1)--(\xc+1,\yc)--(\xc+1.5,\yc+1)--(\xc+2,\yc)--(\xc+2.5,\yc+1)--(\xc+3,\yc)--(\xc+4.5,\yc+1)
(\xc+5.5,\yc+1)--(\xc+5,\yc)--(\xc+4.5,\yc+1)--(\xc+4,\yc)--(\xc+2.5,\yc+1);

\node (1) at (\xc+3,\yc-0.7) {$T_{2,1}$};

\path[fill] (\xd,\yd) circle (\ver);
\path[fill] (\xd+1,\yd) circle (\ver);
\path[fill] (\xd+2,\yd) circle (\ver);
\path[fill] (\xd+3.5,\yd) circle (\ver);
\path[fill] (\xd+4.5,\yd) circle (\ver);
\path[fill] (\xd+5.5,\yd) circle (\ver);
\path[fill] (\xd+6.5,\yd) circle (\ver);
\path[fill] (\xd+8,\yd) circle (\ver);
\path[fill] (\xd+9,\yd) circle (\ver);
\path[fill] (\xd+10,\yd) circle (\ver);
\path[fill] (\xd+0.5,\yd+1) circle (\ver);
\path[fill] (\xd+1.5,\yd+1) circle (\ver);
\path[fill] (\xd+4,\yd+1) circle (\ver);
\path[fill] (\xd+6,\yd+1) circle (\ver);
\path[fill] (\xd+8.5,\yd+1) circle (\ver);
\path[fill] (\xd+9.5,\yd+1) circle (\ver);

\draw[thick] (\xd,\yd)--(\xd+2,\yd)
(\xd+3.5,\yd)--(\xd+6.5,\yd)
(\xd+8,\yd)--(\xd+10,\yd)
(\xd+0.5,\yd+1)--(\xd+1,\yd)--(\xd+1.5,\yd+1)--(\xd+2,\yd)
(\xd+8,\yd)--(\xd+8.5,\yd+1)--(\xd+9,\yd)--(\xd+9.5,\yd+1)
(\xd+3.5,\yd)--(\xd+4,\yd+1)--(\xd+4.5,\yd)
(\xd+5.5,\yd)--(\xd+6,\yd+1)--(\xd+6.5,\yd)
(\xd+4,\yd+1)--(\xd+5.5,\yd)
(\xd+6,\yd+1)--(\xd+4.5,\yd);

\fill (\xd+2.35,\yd) circle (\ver/2);
\fill (\xd+2.75,\yd) circle (\ver/2);
\fill (\xd+3.15,\yd) circle (\ver/2);

\fill (\xd+6.85,\yd) circle (\ver/2);
\fill (\xd+7.25,\yd) circle (\ver/2);
\fill (\xd+7.65,\yd) circle (\ver/2);

\draw[thick,decoration={brace,mirror,raise=0.2cm},decorate] (\xd+1,\yd) -- (\xd+4.5,\yd)
node [pos=0.5,anchor=north,yshift=-0.4cm] {$i$ triangles};

\draw[thick,decoration={brace,mirror,raise=0.2cm},decorate] (\xd+5.5,\yd) -- (\xd+9,\yd)
node [pos=0.5,anchor=north,yshift=-0.4cm] {$j$ triangles};

\node (1) at (\xd+5,\yd-0.7) {$T_{i,j}$};

\end{tikzpicture}

\caption{The class $\mathcal{T}$.}\label{graphsT}
\end{figure}

\FloatBarrier
To summarize, so far we have the following inclusions, all being proper:\\
$\{G_\varnothing\} \subsetneq \{\mathcal{U}^{++}, \mathcal{U}^{--}, \mathcal{U}^{+-}, \mathcal{U}^{-+}, \mbox{ or }
\mathcal{U}^{+-, -+}\}$-graphs $\subsetneq \mathcal{U}^{\pm}\mbox{-graphs }\subsetneq \mathcal{U}\mbox{-graphs}$, where $G_\varnothing$ is the empty graph.

However so far we have seen only 9 different sets of unit interval types, out of the 16 which exist. In the next section we will complete the picture.

\section{Our results}
In this part we take care of each of the seven missing subsets for the unit interval representations of graphs. We first consider the subsets which lead to the class of $\mathcal{U}^{++}$-graph, and then introduce the new class of almost-mixed unit interval graphs.

\subsection{Completion of the unit interval graphs hierarchy}
\begin{thm}
The classes of $\mathcal{U}^{++}$-graphs,
$\mathcal{U}^{++, +-}$-graphs,
$\mathcal{U}^{++, -+}$-graphs,
$\mathcal{U}^{--, +-}$-graphs,
$\mathcal{U}^{--, -+}$-graphs,
$\mathcal{U}^{++, +-, -+}$-graphs and
$\mathcal{U}^{--, +-, -+}$-graphs are the same.
\label{same_as_closed_unit}
\end{thm}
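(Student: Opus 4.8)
The plan is to show that all seven listed classes coincide with $\mathcal{U}^{++}$-graphs by two inclusions. One direction is immediate: since $\mathcal{U}^{++} \subseteq \mathcal{U}^X$ for each of the listed index sets $X$, every $\mathcal{U}^{++}$-graph is a $\mathcal{U}^X$-graph. So the content is the reverse inclusion: given a representation using the allowed interval types, I must convert it into an all-closed unit interval representation of the same graph. By Roberts' theorem it suffices to show that a $\mathcal{U}^X$-graph is a $K_{1,3}$-free interval graph; being an interval graph is clear, so the real task is to rule out $K_{1,3}$ as an induced subgraph (equivalently, by Theorem~2, one could instead directly massage the representation, but the forbidden-subgraph route is cleaner).

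First I would handle the two "asymmetric open" cases $\mathcal{U}^{++,+-}$ and $\mathcal{U}^{++,-+}$, and by left–right reflection also $\mathcal{U}^{--,+-}$ and $\mathcal{U}^{--,-+}$ — note reflection swaps $+-$ with $-+$, and a separate easy observation is that globally complementing open/closed (replacing every $[x,y]$ by $(x,y)$ and vice versa) preserves the intersection graph when only these types are present, which reduces the $--$ cases to the $++$ cases. So concentrate on a representation $\I$ using only closed and closed-open unit intervals. Suppose for contradiction that vertices $a,b,c,d$ induce a $K_{1,3}$ with center $d$: the intervals $\I(a),\I(b),\I(c)$ are pairwise disjoint but each meets $\I(d)$. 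Order them so that $r(a) \le r(b) \le r(c)$ (using that all have the same unit length, this also orders their left endpoints). Since $\I(a)$ and $\I(c)$ are disjoint and both meet the unit interval $\I(d)$, the interval $\I(b)$, lying "between" them and of the same length, is forced to be contained in the span between $r(a)$ and $\ell(c)$; a short endpoint computation — using that a closed-open interval $[x,y)$ still contains its left endpoint, so the only way two unit intervals of these types fail to intersect is $r(\text{left one}) < \ell(\text{right one})$ OR the touching point $r = \ell$ belongs to neither, which for closed/closed-open types forces the left one to be closed-open and $r = \ell$ — shows $\I(b)$ cannot simultaneously be disjoint from both $\I(a)$ and $\I(c)$ while meeting $\I(d)$. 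This is the standard argument that $K_{1,3}$ is not a unit interval graph, and it survives the introduction of the $+-$ type because that type only removes the right endpoint, which cannot create the extra "room" a genuine $K_{1,3}$ would need.

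For the three-type cases $\mathcal{U}^{++,+-,-+}$ and $\mathcal{U}^{--,+-,-+}$, the same outline applies, but now the representation may mix $+-$ and $-+$ intervals, so the pairwise-disjointness analysis has an extra subcase (two unit intervals $[x,y)$ and $(x,z]$ can be disjoint with $y = x$, i.e. "back-to-back"). I would argue that even allowing this, three pairwise-disjoint unit intervals all meeting a fourth unit interval $\I(d)$ is impossible: $\I(d)$ has length one, and to meet three intervals that occupy three essentially disjoint unit-length slots, $\I(d)$ would need length more than one — the boundary "touching" exceptions cannot be exploited three times in the same direction because at most one neighbor of $d$ can touch $\I(d)$ at each of its two endpoints. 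Formalizing "at most one neighbor touches at each end" via the endpoint openness types is the one genuinely fiddly point, and that is where I expect the main obstacle to lie: one must carefully enumerate, for a unit interval $\I(d)$ of a given type and a unit interval $\I(x)$ of a given type meeting it, the possible configurations, and check that the three slots cannot all be packed. Once $K_{1,3}$-freeness is established for each class, Roberts' theorem (Theorem~1) closes the argument, since an interval graph that is $K_{1,3}$-free is a $\mathcal{U}^{++}$-graph, and hence all these classes collapse to the class of $\mathcal{U}^{++}$-graphs.
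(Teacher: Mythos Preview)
Your overall strategy matches the paper's: the easy inclusion is immediate, and for the reverse you invoke Roberts' theorem and argue that $K_{1,3}$ is not representable in any of the listed classes. That is exactly what the paper does.

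Where you diverge is in the execution. You propose a class-by-class analysis, first the two-type families, then (with more case work) the three-type families, using reflections and an ``open/closed complement'' trick to cut down the number of cases. The paper instead does a single computation valid for \emph{any} $\mathcal{U}$-representation of $K_{1,3}$: with centre $c$ and leaves $a,b,d$ ordered by left endpoint, the chain $\ell(c)+1 \ge \ell(d) \ge \ell(b)+1 \ge \ell(a)+2 \ge \ell(c)+1$ forces all positions, and then the adjacency/non-adjacency pattern forces $\I(c)$ to be of type $++$ and $\I(b)$ to be of type $--$. Since every class on the list omits one of $++$, $--$, all seven cases fall at once. This is strictly simpler than your route and avoids the ``fiddly'' three-type enumeration you anticipate.

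One concrete error: your ``globally complementing open/closed preserves the intersection graph'' claim is false. Toggling every endpoint sends $[0,1]$ and $[1,2)$ (which intersect) to $(0,1)$ and $(1,2]$ (which do not), so this operation does not give a valid reduction from the $--$ families to the $++$ families. It is not fatal --- you could simply rerun your endpoint argument for the $--$ cases, or appeal to Theorem~2 --- but the shortcut as stated does not work. The paper's unified argument sidesteps this entirely.
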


\begin{proof}
Firstly each of these classes contains the class of $\mathcal{U}^{++}$-graphs by \autoref{same_unit_th}.\\
Secondly, $K_{1,3}$, which is the only minimal forbidden induced subgraph for $\mathcal{U}^{++}$-graphs, is in none of these classes. Indeed, let us draw a unit interval representation of $K_{1,3}$ and show that we then need both closed and open intervals to do so. We label the vertices as in \autoref{K_1,3}. We may assume, without loss of generality, that $\ell(c) = 0$ and that $\ell(a) \leq \ell(b) \leq \ell(d)$. As all intervals have length one, their intersections enforce the following inequality: $1 = \ell(c)+1 \geq \ell(d) \geq \ell(b)+1 \geq \ell(a)+2 \geq \ell(c)+1 = 1$. This forces $\ell(a) = -1$, $\ell(b) = 0$ and $\ell(d) = 1$. It follows that $\I(c)$ must be a closed interval, the right end of $\I(a)$ must be closed and the left end of $\I(d)$ must be closed too. To meet the required intersections, $\I(b)$ must have open ends, which concludes the proof.
\end{proof}

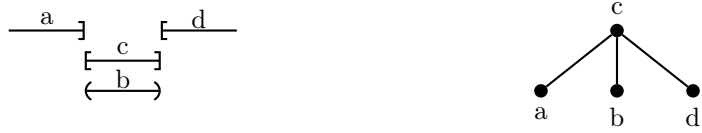
\begin{figure}[H]
\centering
\begin{tikzpicture}[yscale=0.4, thick,place2/.style={thick,
inner sep=0pt,minimum size=1mm}, place/.style={circle,draw=black,fill=black,thick,
inner sep=0pt,minimum size=1.5mm}]
\draw[{-]}] (0,2) --(1,2);
\draw[{[-]}] (1,1) -- (2,1);
\draw[(-)] (1,0) -- (2,0);
\draw[[-] (2,2) -- (3,2);
\node (a) at ( 0.5,2.4) [place2] {a};
\node (c) at ( 1.5,1.4) [place2] {c};
\node (b) at ( 1.5,0.4) [place2] {b};
\node (d) at ( 2.5,2.4) [place2] {d};

\node (cc) at ( 8,2) [place,label=c] {};
\node (aa) at ( 7,0) [place,label=below:a] {};
\node (bb) at ( 8,0) [place,label=below:b] {};
\node (dd) at ( 9,0) [place,label=below:d] {};
\draw (cc)--(bb);
\draw (aa)--(cc);
\draw (cc)--(dd);

\end{tikzpicture}

\caption{The \quotes{claw} $K_{1,3}$ and its unique $\mathcal{U}$-representations.}
\label{K_1,3}
\end{figure}

We now deal with the remaining two subsets of intervals $\mathcal{U}^{\pm, +-}$ and $\mathcal{U}^{\pm,-+}$ which lead, by symmetry, to the same class of graphs. We first show that this is a proper new class. In order to do so, we introduce a lemma about the essence of the $\mathcal{U}^{\pm, +-}$ class: the existence of an induced $K_{1,4}^*$ in every $\mathcal{U}^{\pm, +-}$-graph which is not a $\UU^{\pm}$ graph.

We call a representation {\em injective} if no two vertices are represented by the same interval. Note that every representation of a twin-free graph is injective.

\begin{lemma}
Up to symmetry, there are only two injective $\mathcal{U}$-representations of $K_{1,4}^*$, shown in \autoref{unique_K1,4*} (the leftmost interval is either open-closed or closed).
\label{unique_K1,4*_lemma}
\\
\begin{figure}[H]
\centering
\begin{tikzpicture}[yscale=0.4, thick]
\draw[{[-]}] (1,3) -- (2,3);
\draw[(-)] (1,2) --(2,2);
\draw[{-]}] (0,1) -- (1,1);
\draw[[-)] (2,1) -- (3,1);
\draw[{[-]}] (2,0) -- (3,0);
\draw[] (0,-1.1)--(0,-1.1);
\draw[] (0,3.3)--(0,3.3);
\end{tikzpicture}
\caption{The unique injective representations of $K_{1,4}^*$.}
\label{unique_K1,4*}
\end{figure}
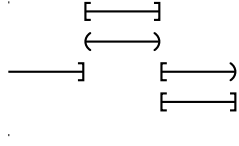
\end{lemma}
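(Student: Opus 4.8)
The plan is to start from the combinatorial structure of $K_{1,4}^*$ and pin down the order of the interval endpoints forced by its adjacencies, then analyze which endpoint types are compatible. Recall that $K_{1,4}^*$ is obtained from the star $K_{1,4}$ by subdividing one edge; label the center $c$, the three leaves adjacent to $c$ as $x_1,x_2,x_3$, the subdivision vertex $d$ (adjacent to $c$ and to the fourth leaf), and the fourth leaf $e$ (adjacent only to $d$). So $x_1,x_2,x_3,d \in N(c)$ pairwise nonadjacent, and $e \in N(d)\setminus N[c]$. Since $x_1,x_2,x_3$ are pairwise nonadjacent but all meet $\I(c)$, and all intervals have unit length, a standard argument (as in the proof of \autoref{same_as_closed_unit} for $K_{1,3}$) shows that, up to reflection, we may normalize $\ell(c)=0$ and then $x_1,x_2,x_3$ must be placed so that one of them, say $x_1$, has $r(x_1)=0$ (touching the left end of $\I(c)$), one, say $x_3$, has $\ell(x_3)=1$ (touching the right end of $\I(c)$), and the middle one $x_2$ satisfies $\ell(x_2)=0$, $r(x_2)=1$, i.e.\ $\I(x_2)$ has the same endpoints as $\I(c)$.

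The next step is to see where $d$ can go: $d$ meets $\I(c)=\langle 0,1\rangle$ but is nonadjacent to all of $x_1,x_2,x_3$. It cannot coincide in endpoints with $\I(c)$ or sit strictly inside because then it would meet $x_2$ (same endpoints) — in fact the only way a unit interval meeting $\I(c)$ avoids the interval with endpoints $0$ and $1$ is to touch $\I(c)$ only at an endpoint, $0$ or $1$; and touching at $0$ would make it meet $x_1$ (whose right end is $0$), touching at $1$ would make it meet $x_3$ (whose left end is $1$) — \emph{unless} the shared point is "claimed" by only one side via open/closed status. So the endpoint-type bookkeeping is exactly what forces $\I(d)$ to lie to the right, with $\ell(d)=1$ and $\I(c)$ closed at $1$ while $\I(d)$ is open at $1$ (the mirror case $\ell(c)$-side giving the claimed symmetry). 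Then $\I(x_3)$ and $\I(d)$ both have left endpoint $1$ and must be disjoint, forcing $\I(x_3)$ open at $1$. Finally $e$ meets $\I(d)$ but not $\I(c)$ and not $d$'s neighbors among $x_i$: since $\I(d)$ has left endpoint $1$, $\I(e)$ must lie further right, $\ell(e)=2$, disjoint from $\I(x_3)$ (which ends at $2$), forcing $\I(x_3)$'s right end and $\I(e)$'s left end to be compatibly half-open; tracing through, $\I(x_3)=[1,2)$ and $\I(e)=[2,3]$ with $\I(d)=(1,2)$.

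Assembling: $\I(x_2)$ has endpoints $0,1$, $\I(x_1)$ touches at the left, $\I(c)=[0,1]$ (closed, as its right end must be closed to meet $x_3$-side... wait, after normalization $\I(c)$ is closed on the side facing $d$; the other side of $\I(c)$ and of $x_1$ is where the lone freedom sits) — and this is precisely the statement that the leftmost interval (here $\I(x_1)$, playing the role of the reflected $e$) is either closed or open-closed, everything else being determined. I would present the endpoint sequence $\ell(x_1)=-1 < \ell(x_2)=\ell(c)=0 < \ell(x_3)=\ell(d)=1 < \ell(e)=2$ with the coincidences and then check, point by point at the four critical abscissae $-1,0,1,2$ (equivalently $0,1,2,3$), which opened/closed assignments realize exactly the required intersections, concluding there are exactly the two claimed representations. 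The main obstacle is the bookkeeping at the shared endpoints: at each point where several intervals meet, I must verify that the open/closed pattern simultaneously creates every required edge and destroys every required non-edge, and argue no alternative pattern works; this is a finite but slightly fiddly case check, and the cleanest way to carry it out is to observe that at a point $p$ shared by a "right end" and a "left end" of two intervals that must be disjoint, at most one of the two may be closed at $p$, while a point $p$ shared by a right end and a left end of two intervals that must intersect forces at least one closed — propagating these constraints left to right along the forced endpoint sequence leaves exactly one binary choice, namely the status of the outermost left endpoint, which is the asserted symmetry.
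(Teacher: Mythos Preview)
Your argument is built on a misidentification of the graph. You describe $K_{1,4}^*$ as the star $K_{1,4}$ with one edge \emph{subdivided}, giving a six-vertex graph with center $c$, three leaves $x_1,x_2,x_3$, a subdivision vertex $d$, and a pendant $e$. In this paper (see Figure~\ref{upmgraph}), $K_{1,4}^*$ is the five-vertex graph obtained from $K_{1,4}$ by \emph{adding an edge} between two of the leaves: a center $c$ of degree four, two degree-one vertices $a,b$, and two degree-two vertices $d,e$ with $de\in E$. These two graphs are not isomorphic, and in fact your six-vertex graph contains an induced $K_{1,4}$ (on $c,x_1,x_2,x_3,d$, which you explicitly declare pairwise nonadjacent in $N(c)$), so it has no $\mathcal{U}$-representation at all. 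Everything downstream --- the placement of $d$, the introduction of $e$ at abscissa $2$, and the final endpoint bookkeeping --- is therefore analyzing the wrong object.

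The paper's argument for the correct graph is short: the induced claws $c\,a\,b\,d$ and $c\,a\,b\,e$ force, via the unique $\mathcal{U}$-representation of $K_{1,3}$, that $\I(c)$ is the middle closed interval and that $a,b$ occupy the ``leftmost'' and ``middle open'' slots (else there is no room for both $d$ and $e$). Then $d$ and $e$ are both forced to the rightmost position; injectivity makes one of them closed and the other closed-open, and only the left end of $\I(a)$ remains unconstrained. Your constraint-propagation idea (at each shared abscissa, record which adjacencies force closed ends and which non-adjacencies force open ends) is the right engine, but it needs to be rerun on the correct five-vertex graph, where the key new feature is that $d$ and $e$ are \emph{adjacent} and must coexist at the same position rather than being separated by an extra vertex.
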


\begin{proof}
Let $\I$ be an injective $\mathcal{U}$-representation of $K_{1,4}^*$.
Let us consider one of the two $K_{1,3}$ contained in $K_{1,4}^*$. From the proof of \autoref{same_as_closed_unit}, know that it must be represented as in \autoref{K_1,3}. Now we need to add one interval for the remaining vertex. Up to symmetry, it must be at the same position as the rightmost interval of \autoref{K_1,3} and must have a closed left end. Since $\I$ is injective, we obtain the representation in \autoref{unique_K1,4*}.
\end{proof}

\begin{figure}[h]
\begin{center}
\begin{subfigure}[b]{0.3\textwidth}
\begin{tikzpicture}[scale=0.6]
\def\ver{0.1} 
\def\x{1}

\def\xa{0}
\def\ya{0}

\def\xb{4}
\def\yb{0}

\def\xc{8}
\def\yc{0}

\def\xd{12}
\def\yd{0}

\path[fill] (\xb,\yb) circle (\ver);
\path[fill] (\xb+1,\yb) circle (\ver);
\path[fill] (\xb+2,\yb) circle (\ver);
\path[fill] (\xb+3,\yb) circle (\ver);
\path[fill] (\xb+1.5,\yb+1.5) circle (\ver);
\path[fill] (\xb+1,\yb+1.5) circle (\ver);

\draw[thick] (\xb,\yb)--(\xb+1.5,\yb+1.5)
(\xb+2,\yb)--(\xb+1,\yb)--(\xb+1.5,\yb+1.5)
(\xb+2,\yb)--(\xb+1.5,\yb+1.5)
(\xb+3,\yb)--(\xb+1.5,\yb+1.5)
(\xb+1,\yb)--(\xb+1,\yb+1.5)
;

\end{tikzpicture}
\end{subfigure}
\,
\begin{subfigure}[b]{0.3\textwidth}
\begin{tikzpicture}[yscale=0.4, thick]
\draw[{[-]}] (1,3) -- (2,3);
\draw[(-)] (1,2) --(2,2);
\draw[{[-]}] (0,1) -- (1,1);
\draw[[-)] (2,1) -- (3,1);
\draw[{[-]}] (2,0) -- (3,0);
\draw[{[-]}] (3,3) -- (4,3);
\draw[] (0,-0.3)--(0,-0.3);
\draw[] (0,3.3)--(0,3.3);
\end{tikzpicture}
\end{subfigure}
\end{center}
\captionsetup{skip=0cm}
\caption{A graph which is a $\mathcal{U}^{\pm, +-}$-graph but not a $\mathcal{U}^{\pm}$-graph, and a $\mathcal{U}^{\pm, +-}$-representation of it.}
\label{not_pm}
\end{figure}
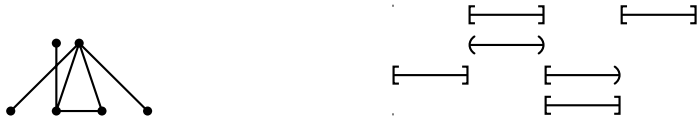

\begin{thm}
The following strict inclusions hold: $\mathcal{U}^{\pm}$-graphs $\subsetneq \mathcal{U}^{\pm, +-}$-graphs $\subsetneq \mathcal{U}$-graphs.
\end{thm}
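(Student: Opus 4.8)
The plan is to establish the two strict inclusions separately, since the non-strict inclusions $\mathcal{U}^{\pm}\text{-graphs}\subseteq\mathcal{U}^{\pm,+-}\text{-graphs}\subseteq\mathcal{U}\text{-graphs}$ are immediate from the definitions ($\mathcal{U}^{\pm}\subseteq\mathcal{U}^{\pm,+-}\subseteq\mathcal{U}$ as sets of intervals). For the first strict inclusion, I would exhibit a single graph that is a $\mathcal{U}^{\pm,+-}$-graph but not a $\mathcal{U}^{\pm}$-graph. The natural candidate is $K_{1,4}^{*}$ itself: by \autoref{pm_forbidden_th} it is a forbidden induced subgraph for $\mathcal{G}^{\pm}$, so it is not a $\mathcal{U}^{\pm}$-graph (it is twin-free, so being a $\mathcal{U}^{\pm}$-graph would put it in $\mathcal{G}^{\pm}$); on the other hand, the representation in \autoref{unique_K1,4*} uses only closed, open, and closed-open intervals — taking the leftmost interval to be closed — which exhibits $K_{1,4}^{*}$ as a $\mathcal{U}^{\pm,+-}$-graph. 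I should double-check that this representation indeed realizes $K_{1,4}^{*}$ (all intervals unit length, correct intersection pattern), which follows from \autoref{unique_K1,4*_lemma}.

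For the second strict inclusion, I would produce a graph that is a $\mathcal{U}$-graph but not a $\mathcal{U}^{\pm,+-}$-graph. By the left–right symmetry noted just before the theorem, $\mathcal{U}^{\pm,+-}$ and $\mathcal{U}^{\pm,-+}$ yield the same class, so a graph failing to be a $\mathcal{U}^{\pm,+-}$-graph also fails to be a $\mathcal{U}^{\pm,-+}$-graph; equivalently, such a graph must require, in every $\mathcal{U}$-representation, the simultaneous presence of a closed-open interval in an essential "leftward" role and an open-closed interval in an essential "rightward" role that cannot be eliminated. The candidate is a graph built so that it forces two copies of $K_{1,4}^{*}$ of opposite handedness — concretely, a graph containing an induced $K_{1,4}^{*}$ together with a mirror-image induced $K_{1,4}^{*}$, glued along enough shared structure that no $\mathcal{U}^{\pm,+-}$-representation can serve both. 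I expect a small explicit graph (on the order of the graph in \autoref{not_pm}, suitably doubled/mirrored, or one of Joos's $\mathcal{S}$-, $\mathcal{S}'$- or $\mathcal{T}$-graphs) will work; I would draw it and then argue directly.

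The argument that the candidate is not a $\mathcal{U}^{\pm,+-}$-graph is the main obstacle and will rest on \autoref{unique_K1,4*_lemma}: that lemma pins down, up to symmetry, the two injective $\mathcal{U}$-representations of $K_{1,4}^{*}$, and in both of them the "doubled" endpoint of the star forces one closed-open and one closed interval stacked at the right (or, in the mirror image, one open-closed and one closed at the left). So in any $\mathcal{U}$-representation of the combined graph, one embedded $K_{1,4}^{*}$ forces a closed-open interval and the mirrored one forces an open-closed interval; if we are only allowed intervals from $\mathcal{U}^{\pm,+-}$, the open-closed one is unavailable, and I must show this conflict is genuinely unavoidable — i.e., that the two stars cannot be represented with the "other" orientation simultaneously because of how they are linked. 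The routine part is then checking that the full combined graph is still a $\mathcal{U}$-graph by gluing the two individual mixed representations (one using a closed-open interval, the other an open-closed interval) along their common vertices; this just requires laying out coordinates consistently. I would also remark that strictness of both inclusions, together with \autoref{same_as_closed_unit} and \autoref{joos_th}, completes the hierarchy as claimed in the introduction.
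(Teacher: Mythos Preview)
Your approach is essentially the paper's, but with two points worth flagging.

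For the first strict inclusion, your choice of $K_{1,4}^{*}$ itself is perfectly valid and in fact slightly cleaner than what the paper does: the paper uses the six-vertex graph of \autoref{not_pm} (which is $K_{1,4}^{*}$ with one extra pendant) and exhibits a $\mathcal{U}^{\pm,+-}$-representation of it. Your observation that the representation in \autoref{unique_K1,4*}, with the leftmost interval taken closed, already lies in $\mathcal{U}^{\pm,+-}$ gives the same conclusion with one fewer vertex.

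For the second strict inclusion there is one concrete error and one gap. The error: the graphs in Joos's families $\mathcal{S}$, $\mathcal{S}'$, $\mathcal{T}$ are \emph{forbidden} induced subgraphs for $\mathcal{G}$, so they are not $\mathcal{U}$-graphs at all and cannot witness $\mathcal{U}^{\pm,+-}\text{-graphs}\subsetneq\mathcal{U}\text{-graphs}$. The gap: ``two copies of $K_{1,4}^{*}$ of opposite handedness, suitably glued'' is the right intuition, but you still need a mechanism that \emph{forces} the orientation of each copy, since a single $K_{1,4}^{*}$ can always be reflected. The paper's witness (\autoref{separating}) achieves this not by taking vertex-disjoint mirrored copies but by taking two \emph{overlapping} copies of $K_{1,4}^{*}$ sharing their centre and three leaves, and then attaching pendant vertices $g$ and $h$ to the two ``doubled'' endpoints; those pendants are what pin down the orientation via \autoref{unique_K1,4*_lemma} and force one closed-open and one open-closed interval simultaneously. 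Your plan would go through once you add such orientation-forcing vertices and drop the reference to Joos's families.
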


\begin{proof}
The inclusions are immediate, we only need to show that they are strict.\\
First \autoref{not_pm} shows a graph which is a $\mathcal{U}^{\pm, +-}$-graph but not a $\mathcal{U}^{\pm}$-graph.
Now we show in \autoref{separating} a graph which is a $\mathcal{U}$-graph, but not a $\mathcal{U}^{\pm,+-}$ one.
\begin{figure}[H]
\centering
\begin{tikzpicture}[yscale=0.4, thick,place/.style={circle,draw=black,fill=black,thick,
inner sep=0pt,minimum size=1.5mm}]

\def\xa{0}
\def\ya{0}

\def\xb{0}
\def\yb{2}

\def\xc{1}
\def\yc{0}

\def\xf{1}
\def\yf{2}

\def\xd{2}
\def\yd{2}

\def\xe{2}
\def\ye{0}

\node (a) at (\xa,\ya) [place,label=below:{a}]{};
\node (b) at (\xb,\yb) [place,label={b}]{};
\node (c) at (\xc,\yc) [place,label=below:{c}]{};
\node (d) at (\xd,\yd) [place,label={d}]{};
\node (e) at (\xe,\ye) [place,label=below:{e}]{};
\node (f) at (\xf,\yf) [place,label={f}]{};
\node (h) at (\xe+1,\ye) [place,label=below:{h}]{};
\node (g) at (\xa-1,\ya) [place,label=below:{g}]{};

\draw (a)--(b)
(a)--(c)
(c)--(b)
(c)--(d)
(c)--(e)
(d)--(e)
(c)--(f)
(a)--(h)
(e)--(g);

\end{tikzpicture}
\caption{A graph separating $\mathcal{U}^{\pm, +-}$-graphs and $\mathcal{U}$-graphs.}
\label{separating}
\end{figure}
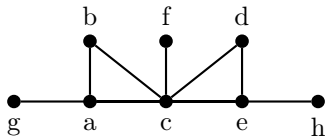
\FloatBarrier
Let us draw an injective $\mathcal{U}$-representation of this graph, and show that it is unique up to a few changes. We will see that this representation needs all four types of intervals, hence our result.\\
First we can see that it contains two induced $K_{1,4}^*$: $cfeab$ and $cfdab$. By \autoref{unique_K1,4*_lemma} and the fact that $f$ is only adjacent to $c$, $\I(c)$ and $\I(f)$ are completely determined as in \autoref{separating_repr}. Now given the neighborhoods of $a, b, d$ and $e$, and the fact that both $a$ and $e$ have one neighbor which is not adjacent to any other vertex, the intervals of $a, b, d$ and $e$ are again completely determined, up to symmetry, as in \autoref{separating_repr}. This shows that we need all four types of intervals to draw this graph.
\end{proof}

\begin{figure}[H]
\centering
\begin{tikzpicture}[yscale=0.4, thick,place2/.style={thick,
inner sep=0pt,minimum size=1mm}]
\draw[{[-]}] (1,3) -- (2,3);
\draw[(-)] (1,2) --(2,2);
\draw[{(-]}] (0,1) -- (1,1);
\draw[{[-]}] (0,0) -- (1,0);
\draw[[-)] (2,1) -- (3,1);
\draw[{[-]}] (2,0) -- (3,0);
\draw[{[-]}] (3,2) -- (4,2);
\draw[{[-]}] (-1,2) -- (0,2);
\draw [] (0,-0.3) -- (0,-0.3);

\node (a) at ( 0.5,0.4) [place2] {a};
\node (b) at ( 0.5,1.4) [place2] {b};
\node (c) at ( 1.5,3.4) [place2] {c};
\node (d) at ( 2.5,1.4) [place2] {d};
\node (e) at ( 2.5,0.4) [place2] {e};
\node (f) at ( 1.5,2.4) [place2] {f};
\node (g) at ( -0.5,2.4) [place2] {g};
\node (h) at ( 3.5,2.4) [place2] {h};

\end{tikzpicture}
\caption{A $\mathcal{U}$-representation of the graph in \autoref{separating}.}
\label{separating_repr}
\end{figure}
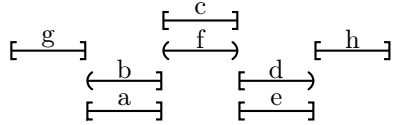

To conclude this part, we now have a complete picture of the different subclasses of the mixed unit interval class. In the schematic \autoref{classification}, $\mathcal{U}^X \subsetneq \mathcal{U}^Y$ is a shorthand notation for $\mathcal{U}^X$-graphs $\subsetneq \mathcal{U}^Y$-graphs. Sets separated by commas define the same classes of graphs.

\sloppy
\begin{figure}[H]
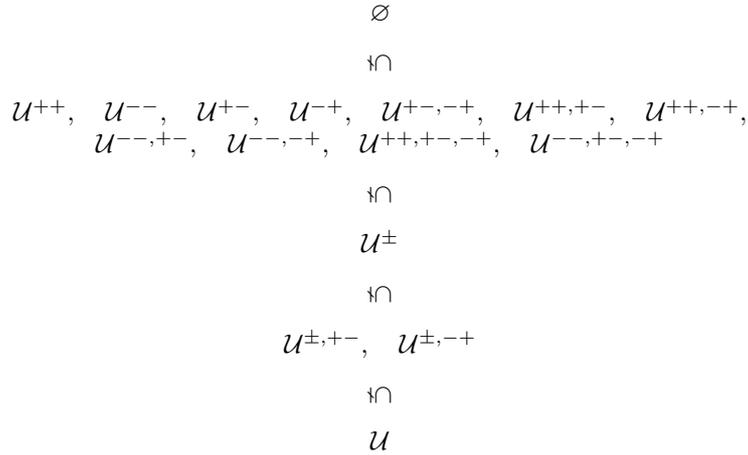

\centering $\varnothing$\\
\rotatebox{-90}{$\subsetneq$}\\
\medskip\smallskip
 $\mathcal{U}^{++} ,\quad \mathcal{U}^{--},\quad \mathcal{U}^{+-},\quad \mathcal{U}^{-+},\quad
 \mathcal{U}^{+-, -+},\quad \mathcal{U}^{++,+-},\quad \mathcal{U}^{++,-+},$\\$ \mathcal{U}^{--,+-},\quad \mathcal{U}^{--,-+},\quad \mathcal{U}^{++,+-,-+},\quad \mathcal{U}^{--,+-,-+}$\\
 \rotatebox{-90}{$\subsetneq$}\\
 \medskip\smallskip
  $\mathcal{U}^{\pm}$\\
  \rotatebox{-90}{$\subsetneq$}\\
  \medskip\smallskip
   $\mathcal{U}^{\pm,+-},\quad\mathcal{U}^{\pm,-+}$\\
   \rotatebox{-90}{$\subsetneq$}\\
   \medskip\smallskip
    $\mathcal{U}$\\
    \caption{Classification of the subclasses of the mixed unit interval graphs.}
    \label{classification}
   \end{figure}
\FloatBarrier
\subsection{Characterization of the new class: the almost-mixed unit interval graphs}

In this part, we characterize the new class of twin-free almost-mixed unit interval graphs, $\GGppm$, by a list of minimal forbidden induced subgraphs. We begin by finding which graphs may be in this list, and afterwards check that all these graphs are indeed forbidden, and minimal.
We recall that since the graphs in  $\GG$ are twin-free, any representation of such a graph is injective.

We first present a lemma which will prove to be very important in what follows. It guarantees that any graph belonging to $\GG \setminus \GG^{\pm, +-}$ has a minimal interval representation in which each semi-closed interval is \quotes{eventually} surrounded by a certain neighborhood of intervals.

\begin{deff}
Let $G \in \GG$ and $\I$ be a mixed unit interval representation of $G$.
Let $\alpha(\I)$ (resp. $\beta(\I)$)  be the number of open-closed (resp. closed-open) intervals in $\I$.
We say that $\I$ is \emph{minimal} if the couple $(\alpha(\I), \beta(\I))$ is lexicographically minimal among all other representations of $G$, that is if $\I'$ represents $G$ then either $\alpha(\I') > \alpha(\I)$ or $\alpha(\I') = \alpha(\I)$ and $\beta(\I') \geq \beta(\I)$.

\end{deff}

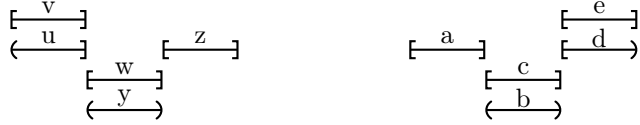
\begin{figure}[H]
\centering
\begin{subfigure}[b]{0.3\textwidth}
\begin{tikzpicture}[yscale=0.4,thick,place2/.style={thick,
inner sep=0pt,minimum size=1mm}]
\draw[{[-]}] (0,3) -- (1,3);
\draw[{(-]}] (0,2) --(1,2);
\draw[{[-]}] (1,1) -- (2,1);
\draw[(-)] (1,0) -- (2,0);
\draw[{[-]}] (2,2) -- (3,2);
\node (u) at ( 0.5,2.4) [place2] {u};
\node (v) at ( 0.5,3.4) [place2] {v};
\node (w) at ( 1.5,1.4) [place2] {w};
\node (y) at ( 1.5,0.4) [place2] {y};
\node (z) at ( 2.5,2.4) [place2] {z};
\node (yfoo) at ( 1.5,-.4) [place2] {};
\end{tikzpicture}

\caption{The neighborhood of a rightmost open-closed intervals.}
\label{neighb-u}
\end{subfigure}
\quad
\begin{subfigure}[b]{0.3\textwidth}

\begin{tikzpicture}[yscale=0.4, thick,place2/.style={thick,
inner sep=0pt,minimum size=1mm}, place/.style={circle,draw=black,fill=black,thick,
inner sep=0pt,minimum size=1.5mm}]
\draw[{[-]}] (0,2) --(1,2);
\draw[{[-]}] (1,1) -- (2,1);
\draw[(-)] (1,0) -- (2,0);
\draw[[-)] (2,2) -- (3,2);
\draw[{[-]}] (2,3) -- (3,3);
\node (a) at ( 0.5,2.4) [place2] {a};
\node (c) at ( 1.5,1.4) [place2] {c};
\node (b) at ( 1.5,0.4) [place2] {b};
\node (d) at ( 2.5,2.4) [place2] {d};
\node (d) at ( 2.5,3.4) [place2] {e};
\node (bfoo) at ( 1.5,-.4) [place2] {};
\end{tikzpicture}

\caption{The neighborhood a leftmost closed-open interval.}
\label{neighb-d}
\end{subfigure}
\caption{The neighborhood of some semi-closed intervals.}

\end{figure}

\begin{lemma}
\label{neighbours_of_u_lemma}
Let $G \in \GG$ and $\I$ be a \emph{minimal} $\mathcal{U}$-representation of it. Then
\begin{enumerate}[$(i)$]
\item if a connected component of $G$ contains a vertex whose interval is open-closed, then it contains vertices $u,v,w,y,z$ whose intervals are described in \autoref{neighb-u};
\item if a connected component of $G$ contains a vertex whose interval is closed-open, then it contains vertices $a,b,c,d,e$ whose intervals are described in \autoref{neighb-d}.
\end{enumerate}
\end{lemma}

\begin{proof}

The overall idea of the proof is the following: if, in the neighborhood of an open-closed interval, one of the mentioned intervals is missing, then we can shift some intervals and close the left end of $\I(u)$ so as to get a representation $\I'$, equivalent to $\I$, with the same number of closed-open intervals but with one less open-closed interval, hence a contradiction. It is immediate that this method also work for closed-open intervals: as we shall see, we do not create any semi-closed interval during the process but close some of then, which necessarily contradicts the minimality of I'. Therefore, we shall only prove the case of an open-closed interval, the one for closed-open interval being completely symmetrical. 
To do so, we first define
$$\varepsilon = \min(\{1\}\cup\{|x - y|: x,y \in \bigcup_{t \in V(G)}{\{\ell(t), r(t)\}} \land x \neq y\})\text{.}$$
This quantity equals the smallest \emph{non-zero} distance between any extremities of any two intervals, or 1 if the graph contains no edges. We will use  it as a security distance: it guarantees that, given an extremity of any interval, other intervals extremities can lie either at the same point or at least $\epsilon$ away from this point.\\

In this proof we say that the interval of a vertex $x$ is \emph{left-free} (resp. \emph{right-free}) if there is no other vertex $t$ such that $r(t) = \ell(x)$ (resp. $\ell(t) = r(x)$).

We begin by two useful remarks.
\begin{Rk}{Let $0 < \varepsilon' < \varepsilon$. If a vertex $x$ is such that $\I(x)$ has an open left (resp. right) end, we can either shift it by $\varepsilon'$ (resp. $-\varepsilon'$) or shift any other set of intervals by $-\varepsilon'$ (resp. $\varepsilon'$) without losing any intersection involving $\I(x)$ (but we can gain intersections).}
\label{open_end}
\end{Rk}
This comes from the definition of $\varepsilon$: since the left end of $\I(x)$ is open, any interval intersecting it at its left must do it on more than a single point, hence the intersection is of length at least $\varepsilon > \varepsilon'$.

\begin{Rk}{Let $\I(x)$ be a left-free (resp. right-free) interval. Closing its left (resp. right) end does not create any intersection.}
\label{left-free}
\end{Rk}

\begin{Claim}{If an interval $I(s)$ is semi-closed, then there exists some closed $\I(t)$ at the same position.}
\label{claim1}
\end{Claim}
\begin{proof}[Proof of \autoref{claim1}]
We first deal with the case when $\I(s)$ is open-closed, and assume for convenience, up to translating the whole interval representation, that $\ell(s) = 0$. We suppose to the contrary that there is no such $\I(t)$. We would like to close the left end of $\I(s)$. To do so, let us define $\I'$ in the following way:
\begin{itemize}
\item $\I'(x) = \I(x)-\varepsilon/2$ if $x \neq s$, $\ell(\I(x)) \in \ZZ$ and $\ell(\I(x)) \leq 0$;
\item $\I'(s)$ = $[0,1]$ (now it is closed);
\item $\I'(x) = \I(x)$, otherwise.
\end{itemize}
We show that $\I$ and $\I'$ are equivalent, that is they represent the same graph. By the definition of $\varepsilon$, we modify no intersection involving any non-integer interval. Since we do not shift the intervals beginning from $1$ on, and we shift all integer intervals $J \in \I(G)$ such that $\ell(J) \leq 0$ by the same quantity, the only intersections we can change involve $\I(s)$ or an interval at the same position as $\I(s)$. Since $\I$ is injective and there is no $[0,1]$ interval, any interval sharing the position of $\I(s)$ must have an open right end. Therefore, it had no intersection at $1$, and shifting it does not remove any intersection. The same applies for $\I(s)$: since its left end is open, it does not lose any intersection. Moreover, since we shifted all other integer intervals, we can close it without creating any new intersection.\\
This shows the equivalence between $\I$ and $\I'$, which contradicts the minimality of $\I$.\\
If $\I(s)$ is closed-open, we proceed in a symmetric way. The $\I'$ we obtain contains one less closed-open interval, which still contradicts the minimality of $\I$.
\renewcommand{\qedsymbol}{$\Box$}
\end{proof}

Now, let $C$ be the connected component we consider, which contains an open-closed interval.
We first define $u$ to be the vertex such that $\I(u)$ is the rightmost open-closed interval in the component $C$. This choice will prove to be essential when we prove the existence of $y$. Up to translating the whole interval representation, we will conveniently assume that $\ell(u) = 0$ throughout the proof.

The existence of $v$ (such that $\I(v) = [0,1]$) is directly given by \autoref{claim1}.\\

Now we deal with the existence of $w$. We again proceed by contradiction, and suppose that no such interval exists. We define $\I'$ as follows:
\begin{itemize}
\item $\I'(t) = \I(t)-\varepsilon/2$ if $t \neq u$, $\ell(\I(t)) \in \ZZ$ and $\ell(\I(t)) \leq 0$;
\item $\I'(u)$ = $[-\varepsilon/4,1-\varepsilon/4]$;
\item $\I'(t) = \I(t)$, otherwise.
\end{itemize}
Using the same arguments as in \autoref{claim1}, we conclude that the first line of the definition of $\I'$ preserves all the intersections and creates none, except possibly the ones with $[1,2]$ or $[1,2)$. However, by assumption there is no $[1,2]$ interval and then by the contrapositive of~\autoref{claim1}, there is no $[1,2)$ interval, so in $\I'$ $v$ also keeps exactly the intersections it has in $\I$. For the same reason, shifting $\I(u)$ by $-\varepsilon/4$ removes no intersections at its right. Since we shift it by less than the other intervals, it is now left-free, and so \autoref{left-free} guarantees that by closing its left end we create no intersection.\\Therefore, $\I$ and $\I'$ are equivalent, which is a contradiction to the minimality of $\I$. So we proved the existence of a vertex $w$ such that $\I(w) = [1,2]$.\\

Our next step is to prove the existence of $\I(y)$.
We suppose that there exists no such $(1,2)$ interval. The choice of $u$ we made tells us that there is no $(1,2]$ interval either. We then define $\I'$ as follows:
\begin{itemize}
\item $\I'(u)$ = $[\varepsilon/2$, $1+\varepsilon/2]$;
\item $\I'(t) = \I(t)$, otherwise.
\end{itemize}
The interval representations $\I$ and $\I'$ are equivalent: since there is no interval with an open left end at 1, shifting $\I(u)$ does not make it gain any intersection. By \autoref{open_end}, it loses none at its left. Furthermore, by definition of $\varepsilon$, $\I(u)+\varepsilon/2$ is left-free, so by \autoref{left-free} we can close it without adding any intersection.\\This contradicts the minimality of $\I$, and provied us with a vertex $y$ such that $\I(w) = (1,2)$.\\

We now show the existence of $\I(z)$.\\
We proceed again by contradiction: if there is no such $[2,3]$ interval, then we can define $\I'$ as follows:
\begin{itemize}
\item $\I'(t) = \I(t)+\varepsilon/2$ if $\ell(\I(t)) \in \ZZ$, $\ell(\I(t)) \geq 2$;
\item $\I'(y) = (1+\varepsilon/2,2+\varepsilon/2)$;
\item $\I'(u) = [\varepsilon/2, 1+\varepsilon/2]$;
\item $\I'(t) = \I(t)$, otherwise.
\end{itemize}
We show that $\I$ and $\I'$ are equivalent. Since there is no $[2,3]$ interval, by the contrapositive of \autoref{claim1} there is no $[2,3)$ interval, hence by the same arguments as in the proof of \autoref{claim1}, we lose no intersection by the first line of the definition of $\I'$. Owing to the first shift and the definition of $\varepsilon$, shifting $\I(y)$ does not create any intersection at its right. Since its left end is open, \autoref{open_end} guarantees that we lose no intersection at its left. Since $\I(u)$ has an open left end, shifting it modifies no intersection at its left. Since $G$ is twin-free, we have shifted $\I(y)$ and there is no $(1,2]$ interval, we create no intersection at its right. Besides, $\I'(u)$ is now left-free, hence we can close it. This contradicts again the minimality of $\I$.
\end{proof}

Now we look for all possible forbidden induced minimal subgraphs of any $G \in \GG \setminus \GGppm$. In what follows, we denote by the class $\mathcal{A}$ the union $\cup{A_i}$, and so on.

\begin{lemma}
Let $G \in \GG$. If $G \notin \GGppm$ then it contains an induced copy of a graph in $\A \cup \BB \cup \BB' \cup \BB'' \cup \CC \cup \CC'$ (see \autoref{class_A} to \autoref{class-C'}).
\label{notGGpm1}
\end{lemma}

\begin{proof}
Let us take such a graph $G \in \GG \setminus \GGppm$ and consider $\I$ a \emph{minimal} $\mathcal{U}$-representation of $G$, that is one with minimum number of open-closed intervals, and subject to this condition, minimum number of closed-open intervals.

First, since $G \in \GG$ and $G \notin \mathcal{G}^{\pm, +-}$, there exist one connected component containing both an  open-closed interval and a closed-open interval. Indeed, if this is not the case, by symmetrizing the intervals in all the components containing no closed-open interval, we obtain an interval representation of $G$ containing no open-closed interval, which is a contradiction.

So from now on we assume that we have vertices $u$ and $d$ in a same connected component such that, from \autoref{neighbours_of_u_lemma}, $\I(u)$ is open-closed, $\I(d)$ is closed-open, and they come with vertices $v,w,y,z,a,b,c,e$ whose intervals are the one of \autoref{neighb-u} and \autoref{neighb-d}. We also assume, up to translating the whole interval representation, that $\ell(a) = 0$. We now consider all possible values for $\ell(u)$. By doing so, we establish a list of graphs among which $G$ must have an induced copy, since it is not in $\GGppm$. 

\input{big_list_forbidden.tex}
We begin by the case when $-2 \leq \ell(u) < 3$. We first consider the subcase when $\I(u)$ is an integer vertex, which implies that some intervals are the same. Since the graph is supposed twin-free, it means that these vertices are the same. This case is covered by the class $\mathcal{C}$ (\autoref{class-C}), where $C_i$ means that $\ell(u) = i$.

Now we still consider the case when $-2 \leq \ell(u) < 3$ but we examine the other subcase, that is when $\I(u)$ is not an integer interval. In this case, since $a,b,c,d,e$ are not integer vertices, we only need to consider in which integer interval the beginning of $\I(u)$ strictly lies in. For instance, it makes no difference if $\I(u) = 0.5$ or $\I(u) = 0.6$ since the graph represented is the same. This case is covered by the class $\mathcal{C'}$ (\autoref{class-C'}). $C'_i$ represents the case when $i < \ell(u) < i+1$.

We now consider the remaining cases, that is when no interval for the vertices $a,b,c,d,e$ intersects an interval for the vertices $u,v,w,y,z$. Here we can notice that the fact that whether $\I(u)$ is integral or not has no interest. The case when $\ell(u) < 2$ corresponds to the class $\mathcal{A}$ (\autoref{class_A}), and the remaining case, when $\ell(u) \geq 3$ corresponds to the classes $\mathcal{B}$, $\mathcal{B'}$ and $\mathcal{B''}$ (\autoref{class_B}, \autoref{class_B'}, \autoref{class_B''}). One could argue that these three classes do not cover the previous cases, because the subgraph induced by the $p_i$'s (the vertices between $z$ and $a$ for class $\mathcal{A}$) could have more edges than a path. However, if that were the case then we could remove some of the $p_i$'s to get a shorter path and our new graph would be an induced subgraph of the former one.
\end{proof}

\begin{lemma}
Let $G \in \GG$. If $G$ contains an induced copy of a graph in $\A \cup \BB \cup \BB' \cup \BB'' \cup \CC \cup \CC'$ then $G \in \GG \setminus \GGppm$.
\label{notGGpm2}
\end{lemma}

\begin{proof}
First, we justify the fact that the classes $\mathcal{B}$, $\mathcal{B'}$ and $\mathcal{B''}$ are forbidden. This is because the graphs in these classes contain the pattern induced by the vertices $a,b,c,d,e,u,v,w,y,z$.
Indeed, \autoref{unique_K1,4*_lemma} specifies that the two copies of $K_{1,4}^*$ these vertices form must be represented, up to symmetry, as in \autoref{unique_K1,4*}. Since there is a path between $e$ and $v$, which is vertex-disjoint from $d$ and $u$, the two interval representations must be symmetrical, hence the need for the two types of semi-closed intervals.

For the class $\mathcal{A}$, we have again the same two copies of $K_{1,4}^*$ and their interval configurations shown in \autoref{unique_K1,4*}, but here vertices $a$ and $z$ are connected by a path which is vertex-disjoint from the two $K_{1,4}^*$, so these two occurrences must be symmetrical, hence these graphs are forbidden.

For the graphs $C'_{-2}$, $C'_{-1}$, $C'_0$, $C'_1$ and $C'_2$ the point is that we have two vertex-disjoint $K_{1,4}^*$ ($decba$ and $uvwyz$). By \autoref{unique_K1,4*_lemma} we know that they can be represented by only two sets of intervals. However if we begin to draw the intervals for $decba$, then there is only one choice for $uvwyz$, up to a small translation (that is, it is equivalent whether one interval begins at 1.4 ou 1.5 for instance).\\
For the graphs $C_{-2}$, $C_{-1}$, $C_0$, $C_1$ and $C_2$ the argument is the same, except that the two $K_{1,4}^*$ share some vertices. We first begin to draw $decba$, and then realize that the other intervals must be exactly as in the above figures.
\end{proof}

We now state our main theorem.
\begin{thm} A twin-free graph $G$ is in $\GGppm$ if and only if it is a $\mathcal{A}\cup\mathcal{B}\cup\mathcal{B'}\cup\mathcal{B''}\cup
\mathcal{C}\cup\mathcal{C'}\cup\mathcal{S}\cup\mathcal{S'}\cup\{T_{0,j}: j \geq 0\}\cup\{T_{1,1}, R_0, R_1, K_{2,3}^*\}$-free interval graph.
\label{bigth}
\end{thm}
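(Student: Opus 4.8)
The statement is an equivalence, and the plan is to prove it by assembling the ingredients developed above rather than by a self-contained argument. The ``interval graph'' clause is free on both sides ($\GGppm$-graphs are interval graphs, and in the converse one starts from an interval graph), so it suffices to prove: (1) every $H$ in the list $\mathcal{F}=\mathcal{A}\cup\mathcal{B}\cup\mathcal{B'}\cup\mathcal{B''}\cup\mathcal{C}\cup\mathcal{C'}\cup\mathcal{S}\cup\mathcal{S'}\cup\{T_{0,j}:j\geq 0\}\cup\{T_{1,1},R_0,R_1,K_{2,3}^*\}$ satisfies $H\notin\GGppm$ (equivalently, is not an induced subgraph of any $\GGppm$-graph, since $\GGppm$ is closed under taking induced subgraphs up to twins); and (2) every interval graph $G\notin\GGppm$ contains some $H\in\mathcal{F}$ as an induced subgraph. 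For (1), the ``old'' part $\mathcal{S}\cup\mathcal{S'}\cup\{T_{0,j}:j\geq0\}\cup\{T_{1,1},R_0,R_1,K_{2,3}^*\}$ is immediate from $\GGppm\subseteq\GG$ and \autoref{joos_th}, as these graphs already lie outside $\GG$. For the new families I would invoke the pattern arguments stated just above the theorem: each graph of $\mathcal{A}\cup\mathcal{B}\cup\mathcal{B'}\cup\mathcal{B''}\cup\mathcal{C}\cup\mathcal{C'}$ contains two copies of $K_{1,4}^*$ whose relative orientation is rigidly fixed -- because they either share vertices or are joined by a path internally disjoint from the branch vertices of both -- so by \autoref{unique_K1,4*_lemma} every $\mathcal{U}$-representation must use a closed-open interval in one copy and an open-closed interval in the other; hence the graph is neither a $\mathcal{U}^{\pm,+-}$- nor a $\mathcal{U}^{\pm,-+}$-graph, i.e.\ not in $\GGppm$.

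For (2) I would argue the contrapositive and split on whether $G\in\GG$. If $G\notin\GG$, then \autoref{joos_th} yields an induced subgraph $H$ of $G$ in $\{K_{2,3}^*\}\cup\mathcal{R}\cup\mathcal{S}\cup\mathcal{S'}\cup\mathcal{T}$; if $H\in\mathcal{F}$ we are done, and otherwise $H$ is $R_i$ with $i\geq2$ or a $T_{i,j}$ with $i,j\geq1$ and $(i,j)\neq(1,1)$ (using $T_{i,j}=T_{j,i}$ we may take $i\leq j$), and in each such case $H$ contains an induced member of $\mathcal{B}$ or $\mathcal{B'}$ by the supergraph observations recorded above, so again $G$ contains a graph of $\mathcal{F}$.

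The substance of (2) is the case $G\in\GG\setminus\GGppm$. Here I would fix a $\mathcal{U}$-representation $\I$ of $G$ with the minimum number of open-closed intervals and, subject to that, the minimum number of closed-open intervals (it is automatically injective, since $G$ is twin-free). Since $G\notin\GGppm$, $\I$ has at least one open-closed interval $\I(u)$; since the classes $\mathcal{U}^{\pm,+-}$ and $\mathcal{U}^{\pm,-+}$ coincide, $G$ is not a $\mathcal{U}^{\pm,-+}$-graph either, so $\I$ also has at least one closed-open interval $\I(d)$. Applying \autoref{neighbours_of_u_lemma} to $u$ and to $d$, alternative (i) is excluded in both forms by the minimality of $\I$ (the reduction in that lemma decreases the count of one half-open type by one and leaves the other count unchanged), so alternative (ii) holds and supplies the prescribed intervals for neighbours $v,w,y,z$ of $u$ and $a,b,c,e$ of $d$. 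By \autoref{connection_remark} we may assume $u$ and $d$ lie in one connected component; translate so that the intervals of $a,b,c,d,e$ are fixed with $\ell(a)=0$. Then the four regimes $\ell(u)<-2$, $\ell(u)\geq3$, $\ell(u)\in\{-2,-1,0,1,2\}$, and $\ell(u)\in(-2,3)\setminus\ZZ$ are exhaustive, and in each the rigidity of the two $K_{1,4}^*$-blocks (again \autoref{unique_K1,4*_lemma}), the positions forced by \autoref{neighbours_of_u_lemma}, and the connectivity between $u$ and $d$ pin down an induced copy inside $G$ of one of $A_i$, $B_i$, $B'_i$, $B''_i$, $C_{-2},\dots,C_2$, $C'_{-2},\dots,C'_2$; this is the required member of $\mathcal{F}$, completing the contrapositive. (Minimality of the list $\mathcal{F}$ is not needed for the equivalence; it is a separate check, using soundness together with the explicit structure of these graphs.)

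The main obstacle is this last case analysis -- verifying exhaustiveness of the split on $\ell(u)$ and, in each regime, that the intervals forced by \autoref{neighbours_of_u_lemma} and \autoref{unique_K1,4*_lemma} admit \emph{exactly} one of the listed configurations with no stray adjacencies; the delicate sub-cases are $\ell(u)<-2$ and $\ell(u)\geq3$, where one must argue that the vertices joining the two $K_{1,4}^*$-blocks form a path internally disjoint from both blocks (so their orientations cannot be made to agree), which is carried out in the annex. A secondary, routine obstacle is the two families of explicit induced-subgraph inspections already invoked: verifying the $\mathcal{R}$- and $\mathcal{T}$-reductions used when $G\notin\GG$, and verifying that each of $\mathcal{A},\dots,\mathcal{C'}$ genuinely forces both a closed-open and an open-closed interval in every $\mathcal{U}$-representation, i.e.\ that the pattern arguments underlying (1) are correct.
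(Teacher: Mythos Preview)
Your proposal is correct and follows essentially the same approach as the paper: the forbidden direction via \autoref{unique_K1,4*_lemma} and \autoref{joos_th}, and the sufficiency direction by minimizing half-open intervals, invoking \autoref{neighbours_of_u_lemma} and \autoref{connection_remark}, and then splitting on the position of $\ell(u)$ relative to the fixed block $a,b,c,d,e$. Your organization into the two implications, with the sub-split on whether $G\in\GG$, is in fact cleaner than the paper's narrative presentation, but the ingredients and the case analysis are identical.
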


\begin{proof}
Since $G$ is in $\GGppm$ if and only if $G \in \GG$ and $G \in \GG \setminus \GGpm$, we know by combining \autoref{joos_th}, \autoref{notGGpm1} and \autoref{notGGpm2} that $G \in \GGppm$ if and only if it is a $\{K^{*}_{2,3}\}\cup \mathcal{R} \cup \mathcal{S} \cup \mathcal{S'} \cup \mathcal{T}\mathcal{A}\cup\mathcal{B}\cup\mathcal{B'}\cup\mathcal{B''}\cup
\mathcal{C}\cup\mathcal{C'}$-free interval graph.
However, we can notice that from the class $\mathcal{R}$ we only need $R_0$ and $R_1$ since the other ones are supergraphs of graphs in $\mathcal{B}$, hence already forbidden. We need $K_{2,3}^*$ and all the graphs in $\mathcal{S}$ and $\mathcal{S'}$. Finally, from the class $\mathcal{T}$ we only have to add the graphs $T_{0,j}$ for $j \geq 0$ and $T_{1,1}$ because the $T_{i,j}$ with $i > 1$ and $j > 1$ are supergraphs of graphs in $\mathcal{B}$, the $T_{1,j}$ for $j > 0$ are supergraphs of graphs in $\mathcal{B'}$ and because for every $i,j \geq 0$, $T_{i,j} \simeq 	T_{j,i}$.
\end{proof}

Furthermore:
\begin{thm}
The graphs of \autoref{bigth} are minimal forbidden induced subgraphs for the class $\GGppm$.
\label{bigthminimal}
\end{thm}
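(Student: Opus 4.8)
The plan is to reduce \autoref{bigthminimal} to a statement about the list $\mathcal{M}:=\mathcal{A}\cup\mathcal{B}\cup\mathcal{B'}\cup\mathcal{B''}\cup\mathcal{C}\cup\mathcal{C'}\cup\mathcal{S}\cup\mathcal{S'}\cup\{T_{0,j}:j\ge 0\}\cup\{T_{1,1},R_0,R_1,K_{2,3}^*\}$ alone. Every graph in $\mathcal{M}$ is an interval graph (those coming from \cite{joos} by \autoref{joos_th}, the new ones because of the $\mathcal{U}$-representations produced in the case analysis), hence so is every one of their induced subgraphs. By \autoref{bigth}, a graph $H\in\mathcal{M}$ is a minimal forbidden induced subgraph for $\GGppm$ exactly when $H$ itself is not $\mathcal{M}$-free --- which is automatic, since $H$ contains itself --- and every proper induced subgraph $H'$ of $H$ is $\mathcal{M}$-free, which, $H'$ being an interval graph, is the same as $H'\in\GGppm$. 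Thus \autoref{bigthminimal} is equivalent to the purely combinatorial assertion that \emph{$\mathcal{M}$ is an antichain for the induced-subgraph order}: no member of $\mathcal{M}$ is a proper induced subgraph of another member.

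A good part of this is already available. By \autoref{joos_th}, and as implicitly used in the pruning performed just before \autoref{bigth}, the list $\{K_{2,3}^*\}\cup\mathcal{R}\cup\mathcal{S}\cup\mathcal{S'}\cup\mathcal{T}$ of \cite{joos} is an antichain; in particular so is its subfamily $\mathcal{J}:=\mathcal{S}\cup\mathcal{S'}\cup\{T_{0,j}:j\ge 0\}\cup\{T_{1,1},R_0,R_1,K_{2,3}^*\}$, and the same pruning records that the members of $\mathcal{R}$ and $\mathcal{T}$ outside $\mathcal{J}$ are supergraphs of graphs in $\mathcal{B}$ or $\mathcal{B'}$. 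Writing $\mathcal{N}:=\mathcal{A}\cup\mathcal{B}\cup\mathcal{B'}\cup\mathcal{B''}\cup\mathcal{C}\cup\mathcal{C'}$ for the new families, it then remains to verify: (A) no member of $\mathcal{N}$ is a proper induced subgraph of another member of $\mathcal{N}$; (B) no member of $\mathcal{N}$ is an induced subgraph of a member of $\mathcal{J}$; and (C) no member of $\mathcal{J}$ is a proper induced subgraph of a member of $\mathcal{N}$.

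For (A)--(C) I would extract from the figures a few structural invariants and compare them: the number of induced copies of $K_{1,4}^*$; the vertices of degree at least four and whether two of them are adjacent; whether the graph is connected; and whether it has a vertex of degree five. In each of the four infinite new families the only vertices of degree at least four are the two (non-adjacent) centres of its two vertex-disjoint induced copies of $K_{1,4}^*$, and what is left is a triangle-free path joining them, carrying at most two extra edges near its ends; every graph in $\mathcal{C}\cup\mathcal{C'}$ is connected, of order at most ten, and has at least two distinct induced copies of $K_{1,4}^*$. On the \cite{joos} side one reads off that $R_0$ and $K_{2,3}^*$ have no induced $K_{1,4}^*$ at all, that $S_i$ and $S'_i$ have exactly one --- centred at the pendant end of the long central path --- and that in $T_{0,j}$, $R_1$, $S_1$, $S'_1$, $T_{1,1}$ any two disjoint induced copies of $K_{1,4}^*$ are separated, along that path, by a vertex of degree at least three, so that no induced bare path joins them. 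With these facts: (A) holds because shortening the connecting path of an infinite-family graph disconnects it while all members of $\mathcal{N}$ are connected, and the $K_{1,4}^*$-count excludes the other placements; (B) holds because an infinite-family graph of $\mathcal{N}$ cannot embed into a \cite{joos}-graph, which has fewer than two induced $K_{1,4}^*$'s or no bare path between two of them, and the finite $\mathcal{C},\mathcal{C'}$ cases are small enough to rule out by hand; (C) holds because every member of $\mathcal{J}$ has a vertex of degree five or two adjacent vertices of degree four, neither of which occurs inside an infinite-family graph of $\mathcal{N}$, while embeddings into the finite $\mathcal{C},\mathcal{C'}$ are again excluded directly. The remaining small cases --- the finite families and the low-parameter graphs of the infinite ones, together with their single-vertex deletions --- are checked by inspection.

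The step I expect to be the main obstacle is precisely this bookkeeping for (B) and (C): pinning down the complete list of induced copies of $K_{1,4}^*$ (and the other invariants) for $S_i$, $S'_i$ and $T_{0,j}$ uniformly in $i$ and $j$, and discharging the several low-parameter graphs in which the families degenerate or coincide. This is routine but lengthy, and --- as the paper already does for the case $\ell(u)<-2$ --- is naturally relegated to an appendix.
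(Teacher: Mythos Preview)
Your reduction to the antichain statement is correct and is a perfectly legitimate way to organise the proof; once \autoref{bigth} is in hand, minimality of the list $\mathcal{M}$ is indeed equivalent to no member of $\mathcal{M}$ being a proper induced subgraph of another. However, the paper proceeds differently on the new families $\mathcal{N}=\mathcal{A}\cup\mathcal{B}\cup\mathcal{B'}\cup\mathcal{B''}\cup\mathcal{C}\cup\mathcal{C'}$. Rather than verifying your conditions (A) and (C) combinatorially, it argues \emph{constructively}: for each $H\in\mathcal{N}$ and each vertex $v$, it exhibits a $\mathcal{U}^{\pm,+-}$-representation of $H-v$ directly. Deleting a path vertex $p_i$ disconnects $H$, and one component can be reflected; deleting any other vertex destroys one of the two forced $K_{1,4}^*$-configurations, so the local shifts of \autoref{neighbours_of_u_lemma} let one eliminate one type of half-open interval. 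This ties the minimality argument back to the structural lemma already proved and sidesteps entirely the bookkeeping of (C). For the Joos graphs the paper does essentially what you propose for (B): it uses minimality in $\GG$ from \cite{joos} to place every proper subgraph in $\GG$, and then checks that the graphs of $\mathcal{S}\cup\mathcal{S'}\cup\{T_{0,j}\}$ contain no member of $\mathcal{N}$, handling $K_{2,3}^*$, $R_0$, $R_1$, $T_{1,1}$ by direct inspection.

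Your route is sound, but be aware that a couple of the invariants you list are not quite right. For (C), the criterion ``vertex of degree five or two adjacent vertices of degree four'' fails for $R_0=K_{1,4}$; you need a separate (easy) observation that the two degree-four centres in each infinite-family graph of $\mathcal{N}$ have an edge among their neighbours, so no induced $K_{1,4}$ appears. Similar care is needed with the $K_{1,4}^*$-counts in $S_i$, $S'_i$ and $T_{0,j}$: your qualitative picture is right, but the precise statement requires distinguishing the pendant end (where a $K_{1,4}^*$ sits) from interior bottom vertices (where two edges among the four neighbours prevent one). None of this is fatal, and you rightly flag the bookkeeping as the main burden; the paper's constructive shortcut for $\mathcal{N}$ simply avoids half of it.
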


\begin{proof}
We already proved that these graphs are forbidden, we now only need to prove that they are minimal with this respect.\\
For the graphs introduced in this section ($\mathcal{A}, \mathcal{B}, \mathcal{B'}, \mathcal{B''},
\mathcal{C}$ and $\mathcal{C'} $), the proof is rather straightforward. We only need to show that if we remove any vertex the resulting graph is no longer forbidden.\\ If we remove a \quotes{$p_i$} vertex in one path, then we disconnect the graph, and can take the symmetry of one of the two components, in terms of interval representation, so as not to have the two different types of semi-closed intervals. If we remove another vertex, then it is easy to see, through the interval representations given above, or more directly from \autoref{neighbours_of_u_lemma}, that the graph is no longer forbidden: we can shift some intervals and close one type of semi-closed intervals.\\

Now let us consider the graphs in $\mathcal{S}$, $\mathcal{S}'$, $T_{0,j}$ for $j \geq 0$, $T_{1,1}$, $R_0$, $R_1$ and $K_{2,3}^*$.
 It is immediate that $K_{2,3}^*$, $R_0$, $R_1$ and $T_{1,1}$ 
are minimal.\\ We then define $\mathcal{O} = \mathcal{S}\cup \mathcal{S'}\cup
\{T_{0,j}: j \geq 0\}$. For the graphs in $\mathcal{O}$, we know by \cite{joos} that they are minimal for the class $\GG$. But from what precedes, if we know that a graph $G$ belongs to $\GG$,  then $\mathcal{A}\cup\mathcal{B}\cup\mathcal{B'}\cup\mathcal{B''}\cup
\mathcal{C}\cup\mathcal{C'}$ is a minimal set of forbidden induced subgraph for $G$ to belong to $\GG^{\pm, +-}$ (given the fact that $G \in \GG$). So it is sufficient to show that no graph in $\mathcal{O}$ admits as an induced subgraph a graph in the previous list.\\
First, note that the graphs in the classes $\mathcal{A}, \mathcal{B}, \mathcal{B'}$ and $\mathcal{B''}$ are not induced subgraphs of graphs in $\mathcal{O}$ because the former contain two disjoint copies of $K_{1,4}^*$ and from the latter only $T_{0,j}$ also contain two such copies, but the rest of the graph does not match (in our lists of graphs, $y$ and $w$ are not adjacent). The same kind of argument applies to graphs $C_{-2}$ and $C_{-1}$. $C_1$ has two degree 5 vertices, which no graph in $\mathcal{O}$ contains. $C_0$ and $C'_{-2}$ both contain one degree five vertex which could correspond to only a few vertices in $\mathcal{S}$ and $\mathcal{S'}$, but we can see that none of this places matches the rest of our graphs. Now $C'_{-1}$, $C'_0$ and $C'_1$ have vertices of degree at least 6 which does not appear in $\mathcal{O}$. Finally $C'_2$ contains a $K_4$, which only appears in $\mathcal{S'}$ but it is easy to see that $C'_2$ is not an induced subgraph of any graph of $\mathcal{S'}$.
\end{proof}

\subsection{Algorithmical perspectives}
The proof of the previous theorem leads to some algorithms enabling us to decide if a graph is in $\UU^{\pm, +-}$ and if it is, to give a corresponding interval representation of it, as stated below. The following theorems use the standard notations $n = |V|$ and $m = |E|$.

\begin{thm}
\label{time-repr-th}
There exists an algorithm which takes a graph $G \in \mathcal{U}^{\pm, +-}$ as input, and outputs a corresponding $\mathcal{U}^{\pm, +-}$-representation of $G$ in time $O(n^2)$.
\end{thm}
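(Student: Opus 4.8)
The plan is to compute a mixed unit interval representation of $G$ and then reshape it into one that uses no open-closed intervals. First I would compute the twin classes of $G$ and replace $G$ by its twin-free quotient $G'$; any $\mathcal{U}^{\pm,+-}$-representation of $G'$ yields one of $G$ by copying each representative's interval onto its whole twin class, and this reduction costs $O(n^2)$. Since $\GGppm\subseteq\GG\subseteq\{\text{mixed unit interval graphs}\}$, I would then invoke the quadratic-time algorithm of Shuchat et al.~\cite{shuchat} to obtain a $\mathcal{U}$-representation $\I$ of $G'$.

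The core is a greedy elimination phase. As long as $\I$ contains a half-open interval, I would attempt the reduction of \autoref{neighbours_of_u_lemma}: if some open-closed interval $\I(\hat u)$ is not surrounded by the configuration $u,v,w,y,z$ described there -- equivalently, one of the conditions appearing in \autoref{claim1}, \autoref{claim2}, \autoref{claim3} or in the final part of that proof fails -- then that proof gives an explicit way to shift the relevant integer intervals and close the left end of $\I(\hat u)$, producing an equivalent representation with one fewer open-closed interval; symmetrically for closed-open intervals. Each such step turns a half-open interval into a closed one and creates no new half-open interval, so the total number of half-open intervals strictly decreases and there are at most $n$ steps. To keep each step within $O(n)$ time I would maintain a dictionary from the integer grid positions to the multiset of interval types located there; this lets one test the ``surrounded by $u,v,w,y,z$'' condition for a given half-open interval in constant time after an $O(n)$ (re)build of the dictionary. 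Scanning all half-open intervals, deciding which (if any) is reducible, and performing the shift then costs $O(n)$, hence $O(n^2)$ over the whole phase.

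When the elimination phase gets stuck, every remaining open-closed interval carries the configuration $u,v,w,y,z$ and, symmetrically, every remaining closed-open interval carries the configuration $a,b,c,d,e$. Here I would use that $G'\in\GGppm$: by \autoref{connection_remark} and the case analysis on the value of $\ell(u)$ carried out in the proof of \autoref{bigth}, no connected component of $G'$ can contain both an open-closed and a closed-open interval, since otherwise $G'$ would contain one of the forbidden subgraphs from $\mathcal{A}\cup\mathcal{B}\cup\mathcal{B'}\cup\mathcal{B''}\cup\mathcal{C}\cup\mathcal{C'}$. Therefore each component uses at most one of the two half-open types, and reversing the real line on exactly those components that use the open-closed type produces a representation with no open-closed interval, i.e.\ a $\mathcal{U}^{\pm,+-}$-representation; finally I would re-expand the twin classes. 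Adding the costs -- $O(n^2)$ for the twin reduction, $O(n^2)$ for \cite{shuchat}, $O(n^2)$ for the elimination phase, $O(n)$ for the reflections -- gives the claimed bound.

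I expect the main obstacle to be the bookkeeping that keeps the elimination phase at $O(n^2)$ rather than $O(n^3)$: the reduction of \autoref{neighbours_of_u_lemma} is stated existentially and its proof shifts a whole family of integer intervals, so one must argue that ``is this half-open interval reducible?'' can be answered in $O(1)$ per query after an $O(n)$ refresh of the grid-position dictionary, and that the shift itself touches only $O(n)$ intervals. A secondary point is the termination analysis, namely verifying that closing an end never reintroduces a half-open interval and never increases the count of the other half-open type, so that the greedy phase performs at most $n$ surgeries; once that is in place, correctness of the final output follows from \autoref{bigth} (no component is left with both half-open types) together with the fact that reflection and the surgeries of \autoref{neighbours_of_u_lemma} preserve the intersection pattern.
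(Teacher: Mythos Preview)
Your proposal is correct and follows essentially the same approach as the paper's own proof: reduce to the twin-free quotient, obtain a $\mathcal{U}$-representation via the algorithm of Shuchat et al., greedily close half-open intervals using the local transformations from the proof of \autoref{neighbours_of_u_lemma}, then reflect the components that still contain the wrong half-open type, and finally re-expand twins. The paper organizes step~3 as two passes per component (right-to-left, then left-to-right) and cites an $O(n+m)$ twin-pruning routine rather than $O(n^2)$, but these are implementation details that do not change the overall structure or the final $O(n^2)$ bound; your more explicit bookkeeping (the grid-position dictionary and the observation that each surgery touches $O(n)$ intervals and never creates a new half-open end) is in fact more detailed than what the paper spells out.
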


\begin{proof}
We give here the algorithm, which takes a $\mathcal{U}^{\pm, +-}$-graph $G$ as input:
\begin{enumerate}[1.]
\item Prune $G$ into a twin-free graph $G'$;
\item Get a $\UU$-representation of it $\I'_0$;
\item Derive $\I'_1$ from $\I'_0$ by performing the following:
\begin{enumerate}[a.]
	\item First from right to left, try to close every open-closed interval with the transformations of the proof of \autoref{neighbours_of_u_lemma};
	\item Then from left to right, try to close similarly every closed-open interval;
\end{enumerate}
\item Let $\I'_2$ be obtained from $\I'_1$ by symmetrizing the interval representations of the connected components which contain some open-closed intervals;
\item Extend naturally the $\I'_2$ to an interval representation $\I^*$ of $G$: if $x'\in V'$ and $x \in V \setminus V'$ are twin vertices, we define $\I^*(x) = \I'_2(x)$;
\item Return $\I^*$.
\end{enumerate}
First, we claim that the algorithm is correct. Since the input graph $G$ is a $\UU^{\pm, +-}$- graph that we transform in a $\GGppm$-graph $G'$ by pruning it, we know that in each connected component  of $G'$ it is possible to close all semi-closed interval of one type. Indeed, otherwise we would have the intervals $a,b,c,d,e,u,v,w,y,z$ of \autoref{neighb-u} and \autoref{neighb-d} in a same connected component, which we proved implied that the graph is not in $\GG^{\pm, +-}$. We use the transformations of the proof of \autoref{neighbours_of_u_lemma}, which work if the semi-closed interval we try to close is not in a particular neighborhood of intervals. One important point is that, given the direction of the sweeps, if we fail to close a semi-closed interval it means that we encountered a neighborhood of the form \autoref{neighb-u} or \autoref{neighb-d} we is a certificate that the interval for $u$ or for $d$ cannot be closed by \autoref{unique_K1,4*_lemma}. Since our graph is in $\UU^{\pm, +-}$, after step 3. in each connected component there will only remain one type of semi-closed intervals. Since we want a $\UU^{\pm, +-}$ interval representation, we need step 4. Step 5. is trivial since, as we already mentioned, if $x$ and $x'$ are twin, we can give the same interval to both of them.

Concerning the time complexity, step 1 can be done in time $O(n+m)$ as in \cite{ethierry}. Step 2 can be done in $O(n^2+m)$ as shown in \cite{shuchat}. Step 3 takes time $O(n^2)$: we try to close each interval at most once, and trying to closing an interval takes $O(n)$ if we have to shift many intervals, or $O(1)$ (we check only the existence of 4 intervals at specific positions). Step 4 and 5 take time $O(n)$, hence the overall quadratic complexity, given the fact that the graphs we deal with are simple.
\end{proof}

From this algorithm we can derive another one to test if a graph is in $\UU^{\pm, +-}$, but first we state a simple lemma about the recognition of the class of mixed unit intervals. This lemma comes easily from the Algorithm 17 presented in \cite{shuchat}.

\begin{lemma}
\label{mixed unit-time}
The class of mixed unit interval graphs can be recognized in time $O(n^2)$.
\end{lemma}

\begin{proof}
The proof of this result comes from Algorithm 17 in \cite{shuchat} that we used in the previous proof. This Algorithm takes a graph $G$ supposed to be in $\UU$ and gives a $\UU$-interval representation of it. Since it works in $O(n^2)$, there exists a constant $C$ such that this algorithm performs at most $Cn^2$ operations on any graph $G \in \UU$ with $n$ vertices. Therefore one way to know whether a graph $G$ is indeed in $\UU$ is to give it to Algorithm 17, let it do at most $Cn^2$ operations and see its result: if the algorithm has terminated by that time and the interval representation it gave indeed represents $G$, we know that $G$ belongs to the $\UU$ class. In the other cases (the representation is wrong or the algorithm has not finished yet), we know that the graph $G$ is not in $\UU$. Checking that the interval representation is correct can be done in $O(n+m)$, hence our result.\\
\end{proof}

\begin{rk}
The proof of \autoref{mixed unit-time} is easy to understand because it does not require to fully dive into details of Algorithm 17 in \cite{shuchat}. However, it is possible to modify Algorithm 17 so as to recognize the class of mixed unit interval graphs in time $O(n^2)$. Indeed, it is possible to run the main loop of Algorithm 17 $n$ times. If it fails at some point because a vertex which should be present is not, or if the representation we get at the end does not correspond to the input graph, then we know that the graph is not a mixed unit interval graph. Otherwise, we know it belongs to the class. This algorithm, which does not require to find a bound $C$ mentioned in the proof, is implementable.
\end{rk}

\begin{thm}
\label{recognizability-time-th}
The class of almost-mixed unit interval graphs can be recognized in time $O(n^2)$.
\end{thm}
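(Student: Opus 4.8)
The plan is to derive the recognition procedure from the constructive algorithm of \autoref{time-repr-th}, which already does almost all of the work, and to append a single test. Given a graph $G$ with $n$ vertices and $m$ edges, I would first run the $O(n^2)$ recognition algorithm of Shuchat et al.~\cite{shuchat}: if $G$ is not a mixed unit interval graph, reject — this is correct since every $\mathcal{U}^{\pm, +-}$-graph is a $\mathcal{U}$-graph. Otherwise I keep the $\mathcal{U}$-representation it outputs, prune twins in time $O(n+m)$ as in~\cite{ethierry} to obtain the twin-free core $G'$, and restrict the representation to $G'$.

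Next I run steps~3a--3b of the algorithm of \autoref{time-repr-th} on $G'$: on each connected component, using the transformations from the proof of \autoref{neighbours_of_u_lemma}, I attempt to close every open-closed interval (scanning right to left) and then every closed-open interval (scanning left to right). As argued in the proof of \autoref{time-repr-th}, these transformations never create a new opportunity to close an interval, so a single greedy pass suffices and produces a representation of $G'$ with the minimum possible number of open-closed intervals and, subject to that, the minimum number of closed-open intervals; since each closing is attempted at most once and costs $O(n)$, the pass costs $O(n^2)$ overall.

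The decision is then: accept if and only if, after this pass, no connected component of $G'$ contains both an open-closed and a closed-open interval (equivalently, since the pass is optimal, iff no component still contains an open-closed interval). If the test succeeds, symmetrizing each component that still uses open-closed intervals yields a representation using only closed, open and closed-open intervals, and re-inserting the twins gives a $\mathcal{U}^{\pm, +-}$-representation of $G$; thus $G$ is an almost-mixed unit interval graph and we may even output this representation. If some component fails the test, i.e.\ carries both half-open types, then by minimality of the reduced representation alternative~(i) of \autoref{neighbours_of_u_lemma} is impossible for the offending open-closed and closed-open intervals, so alternative~(ii) holds and they appear with the neighborhoods displayed there; running the case analysis on $\ell(u)$ from the proof of \autoref{bigth} then exhibits, inside that component, an induced copy of one of the forbidden graphs in $\mathcal{A}\cup\mathcal{B}\cup\mathcal{B'}\cup\mathcal{B''}\cup\mathcal{C}\cup\mathcal{C'}$, whence $G\notin\GGppm$ by \autoref{bigth}. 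The total running time is $O(n^2)$ for~\cite{shuchat}, $O(n+m)$ for pruning, and $O(n^2)$ for the closing pass, hence $O(n^2)$.

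The main obstacle is the soundness of rejection: one must be certain that the greedy closing pass attains the true minimum, so that a component it cannot make one-sided is genuinely not $\mathcal{U}^{\pm, +-}$-representable and, by the classification underlying \autoref{bigth}, contains a forbidden induced subgraph. Both ingredients — optimality of the transformations and the forbidden-subgraph analysis — are already available in the excerpt, so the work is mostly in assembling them. A secondary point to treat with care is the interaction noted in the proof of \autoref{time-repr-th}: the neighbor $\I(z)$ (resp.\ $\I(a)$) of \autoref{neighbours_of_u_lemma} may itself be a closed-open (resp.\ open-closed) interval and block the closing of $\I(u)$ (resp.\ $\I(d)$); one has to confirm that each such configuration still forces a forbidden subgraph, so that rejection remains correct in those cases too.
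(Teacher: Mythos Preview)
Your proposal is correct and follows essentially the same approach as the paper: run the constructive algorithm of \autoref{time-repr-th} (prune twins, obtain a $\mathcal{U}$-representation via~\cite{shuchat}, greedily close half-open intervals using the transformations of \autoref{neighbours_of_u_lemma}), then inspect the resulting representation and accept iff no component carries both half-open types. The paper's justification of rejection is slightly terser than yours --- it argues contrapositively that since the algorithm provably succeeds on $\mathcal{U}^{\pm,+-}$-graphs, an output with both half-open types certifies $G\notin\mathcal{U}^{\pm,+-}$, and points to the two $abcde$/$uvwyz$ neighborhoods in the same component as a certificate --- whereas you additionally route through \autoref{bigth} to name an explicit forbidden induced subgraph; both are valid and amount to the same reasoning.
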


\begin{proof}
The proof is straightforward: we first check if the input graph $G$ is a mixed unit interval graph using \autoref{mixed unit-time}. If $G \in \UU$ then we run the algorithm of the proof of \autoref{time-repr-th}. If the interval representation we get at the end of the algorithm contains at most one type of semi-closed intervals, then this proves that our graph is an almost-mixed unit interval graph; else, the interval representation contains two types of semi-closed intervals. As our algorithm returns a correct representation when given an almost-mixed unit interval graph, then this incorrect result means that the graph is not a $\UU^{\pm, +-}$-graph. Also, by construction of the algorithm, in case the input graph is a $\UU$-graph, the output of our algorithm gives a certificate that the graph is not a $\UU^{\pm, +-}$-graph: the representation must contain in a same connected component two neighborhoods of intervals like $abcde$ and $uvxyz$ as in \autoref{neighbours_of_u_lemma}.

Testing if the output of the algorithm is of the right form can be done in time $O(n+m)$, hence the claimed complexity.
\end{proof}

\end{document}